\documentclass[12pt]{article}
\usepackage[latin1]{inputenc}
\usepackage{graphicx}
\usepackage{color}
\usepackage{geometry}  
\usepackage{amsmath, amssymb} 
\usepackage{setspace} 
\usepackage{booktabs} 
\usepackage[round]{natbib} 
\usepackage{authblk}

\onehalfspacing 

\newcommand{\bs}[1]{\ensuremath{\boldsymbol{#1}}} 
\newcommand*{\QEDA}{\hfill\ensuremath{\square}}

\usepackage{amsthm}
\newtheorem{proposition}{Proposition}
\newtheorem{defin}{Definition}
\newtheorem{cor}{Corollary}
\usepackage{floatrow}         
\usepackage{subfig}
\floatsetup[figure]{style=plain, subcapbesideposition=top}
\captionsetup*[subfigure]{position=top,justification=raggedright}
\floatsetup[table]{style=plaintop}
\theoremstyle{definition}
\newtheorem{remark}{Remark}

\usepackage{subfig}
\captionsetup*[subfigure]{position=top,justification=raggedright, singlelinecheck=off}
\usepackage{booktabs}

\begin{document}

\title{Modeling the Ratio of Correlated Biomarkers Using Copula Regression}
\author[1]{Moritz Berger}
\author[2]{Nadja Klein}
\author[3]{Michael Wagner}
\author[1,3]{Matthias Schmid}
\affil[1]{Department of Medical Biometry, Informatics and Epidemiology, Medical Faculty, University of Bonn, Venusberg Campus 1, 53127 Bonn}
\affil[2]{Chair of Uncertainty Quantification and Statistical Learning, Research Center for Trustworthy Data Science and Security (UA-Ruhr) and Department of Statistics (Technische Universit\"at Dortmund), Joseph-von-Fraunhofer Stra{\ss}e 25, 44227 Dortmund}
\affil[3]{German Center for Neurodegenerative Diseases, Venusberg Campus 1, 53127 Bonn}
\renewcommand\Affilfont{\itshape\small}

\date{}


\maketitle

\begin{abstract}
\noindent Modeling the ratio of two dependent components as a function of covariates is a frequently pursued objective in observational research. Despite the high relevance of this topic in medical studies, where biomarker ratios are often used as surrogate endpoints for specific diseases, existing models are based on oversimplified assumptions, assuming e.g.\@ independence or strictly positive associations between the components. In this paper, we close this gap in the literature and propose a regression model where the marginal distributions of the two components are linked by Frank copula. A key feature of our model is that it allows for both positive and negative correlations between the components, with one of the model parameters being directly interpretable in terms of Kendall's rank correlation coefficient. We study our method theoretically, evaluate finite sample properties in a simulation study and demonstrate its efficacy in an application to diagnosis of Alzheimer's disease via ratios of amyloid-beta and total tau protein biomarkers.
\end{abstract}
{\bf Keywords:} Distributional regression, Frank copula, Gamma distribution, Negative dependence, Ratio outcome

\section{Introduction}

A common objective in observational research is to analyze the ratio of two (possibly dependent) components $U,V \in \mathbb{R}^+$ \citep{qilong}. Typical examples are, among others, (i) the LDL/HDL cholesterol ratio in cardiovascular research \citep{nata03}, defined as the ratio of the low-density lipoprotein and the high-density lipoprotein concentrations in plasma or serum, (ii) the CD4/CD8 ratio in HIV research \citep{caby2016}, which measures the ratio of T helper cells to cytotoxic T cells in the human immune system, (iii) the testosterone over epitestosterone (T/E) ratio in antidoping research \citep{sottas}, and (iv) the GEFC/REFC ratio in ophthalmic research, corresponding to the green and red emission components in fundus autofluorescence imaging \citep{wintergerst2022}. In many of such studies, biomarker ratios are used as early indicators or even as surrogate endpoints for a specific disease. In these cases, the focus is not only on the characterization of the marginal ratio distribution, but also on modeling this distribution as a function of a set of covariates $X=(X_1,\hdots,X_p)^\top$. Usually, this amounts to specifying a regression model that includes the ratio as outcome variable \citep{berger2019}.

When setting up a model relating the ratio outcome $R= U/V$ to the covariates~$X$, a common assumption is that both components follow either log-normal or gamma distributions, thereby accounting for the nonnegativity of the component values and the skewness of their distributions \citep{mitchell, vandomelen}. In the former case it is easily derived that the ratio is again log-normally distributed. The latter case, which will be dealt with in this paper, is considerably less straightforward but is often preferred in practice due to its increased efficiency \citep{firth, wiens, berger2019}.

In the special case where $U$ and $V$ are {\em independently} gamma distributed, the ratio $R=U/V$ follows a generalized beta distribution of the second kind, in the following abbreviated by \emph{GB2} \citep{KleiberKotz}. A regression approach for the GB2 distribution has been proposed by \citet{Tulu2013}, who studied the determinants of alcohol abuse in HIV-positive persons using the framework of vector generalized additive models \citep[VGAMs;][]{YeeVGAMBook}. More recent examples include \citet{safari2020}, \citet{bourguignon2021}, \citet{medeiros2021} and \citet{dos2021}. 
The case of {\em correlated} gamma distributed components has earlier been studied by \citet{Lee1979} and \citet{tubbs1986}. Based on Kibble's bivariate gamma distribution for $(U,V)$ \citep{Kibble1941}, \citet{berger2019} developed the \emph{extended GB2 (eGB2) model} for the ratio of two {\it positively correlated} gamma distributed components. Their model is characterized by three parameters, of which one is directly interpretable in terms of the Pearson correlation coefficient between the two components. Conceptually, the extended GB2 model can be seen as a distributional regression model embedded in the framework of generalized additive models for location, scale and shape \citep[GAMLSS;][]{rigsta2005,berger2020}. 

Despite its major importance in biostatistics, no regression modeling strategy exists (to the best of our knowledge) for ratio outcomes with two \emph{negatively correlated} gamma distributed components. Negatively correlated measurements are encountered in numerous applications, for example in dementia research, where ratios of cerebrospinal fluid (CSF) biomarkers are used for the early diagnosis of Alzheimer's disease \citep{koyama}. 
Importantly, measurements of the widely employed amyloid-$\beta$ 42 protein and total tau protein biomarkers are known to exhibit a negative correlation \citep{tapiola2009}. In recent publications, the Gaussian regression model has been used for modeling ratios of CSF biomarkers \citep[e.g.,][]{xu2020}. Clearly, this model neither accounts for the characteristics of the bivariate distribution of $(U,V)$ nor for the skewness in the distribution of the ratio outcome $R$.

Motivated by these problems, and to address the current shortcomings in modeling ratio outcomes with negatively correlated gamma distributed components, we propose a regression model where the joint bivariate distribution of the two gamma distributed components is defined by Frank copula \citep{genest1987}. By this choice, the model flexibly accounts for either negative or positive associations between the two components (measured by Spearman's or Kendall's rank correlation coefficient). It also allows for modeling different characteristics of the two marginal distributions, including possibly unequal rate and shape parameters. By relating the covariates~$X$ to the parameters of the marginals, as well as to the association parameter defined by the copula, our model further allows to derive the conditional probability density function of~\mbox{$R\,|\,X$} as a function of covariates. This, in turn, allows for the analysis of conditional distributional parameters (like the expected value, median or quantiles), including valid inferential conclusions for these quantities. 

We apply the new approach to data from a multi-center observational cohort study conducted by the German Dementia Network  \citep[DCN;][]{kornhuber}. 
Study participants were diagnosed with either mild cognitive impairment (MCI), Alzheimer's Disease (AD), or other dementia. The study aims at determining the diagnostic and prognostic power of clinical, laboratory and imaging methods. This task is considered to be a major challenge, as the period from the first clinical symptoms of AD to disease onset might take years to decades \citep{sperling2013}. Consequently, as biomarker ratios like the amyloid-$\beta$ 42/total tau ratio are considered to be strong predictors of AD progression, it is of high interest to relate these measurements to patient characteristics like age, sex and educational level \citep{jack2015}. As will be demonstrated in Section~\ref{sec:app}, the proposed copula regression model can be suitably applied to address this problem, resulting in meaningful descriptive and inferential findings regarding the associations between the biomarker ratio and individual patient characteristics.

The rest of the paper is organized as follows: Section \ref{sec:method} derives the distributional copula regression model, states novel theoretical results with implications for the interpretation of covariate effects, and presents estimation, prediction and inference. The efficacy of our approach is demonstrated empirically in a simulation study in Section \ref{sec:sim} and in our main application to AD progression in Section \ref{sec:app}. The main findings of the paper are discussed in Section \ref{sec:discussion}.

\section{Methods}\label{sec:method}

Section \ref{sec:method1} starts by deriving the distribution of the ratio of two gamma distributed components with dependence induced by a Frank copula. Details on model specification and fitting are given in Section \ref{sec:definition}. Section \ref{sec:fe_ci} covers the prediction of distributional parameters and inference.

\subsection{Distributional concept}\label{sec:method1} \vspace{.2cm}

Let $U$ and $V$ be two gamma distributed random variables with probability density functions (PDF)
\begin{align}
f_U(u)=\frac{\lambda_U^{\delta_U}}{\Gamma(\delta_U)}u^{\delta_U-1}\exp(-\lambda_Uu) \ \ \ \text{and} \ \ \
f_V(v)=\frac{\lambda_V^{\delta_V}}{\Gamma(\delta_V)}v^{\delta_V-1}\exp(-\lambda_Vv)\,, 
\end{align}
where $\lambda_U$, $\lambda_V$ $>0$ denote the rate parameters and $\delta_U$, $\delta_V$ $>0$ denote the shape parameters of $f_U$ and $f_V$, respectively.  
We allow for dependencies between $U$ and~$V$, and assume that their joint distribution can be described by Frank copula with copula function $C_{\theta}$. By Sklar's theorem, the joint distribution of $(U,V)$ is thus given by
\begin{align}\label{eq:FrankC}
F_{U,V}(u,v)&=C_{\theta}\left(F_U(u),F_V(v)\right)\nonumber \\
&=-\frac{1}{\theta}\,\log\left(1+\frac{(\exp(-\theta F_U(u))-1)(\exp(-\theta F_V(v))-1)}{\exp(-\theta)-1}\right)\,,
\end{align}
where $F_{U,V}$, $F_U$ and $F_V$ denote the joint bivariate and marginal cumulative distribution functions (CDFs) of $U$ and $V$, respectively. The parameter $\theta \in \mathbb{R} \setminus\{0\}$ determines the association between $U$ and $V$. It can be shown that Kendall's rank correlation coefficient $\tau \in [-1,1]$ is a monotone increasing function of $\theta$, given by 
\begin{equation}\label{eq:tau}
\tau (\theta) = 1 \, + \, \frac{4}{\theta} \left(\frac{1}{\theta} \int_0^\theta \frac{t}{e^t - 1} \, dt - 1 \right)
\end{equation}
\citep{joe2014}. As a consequence, the CDF in~\eqref{eq:FrankC} allows for (possibly highly) positive or negative correlations between the two components $U$ and $V$. The joint PDF of $(U,V)$ is given by
\begin{align}\label{eq:Frankc}
f_{U,V}(u,v)&=\frac{\partial^2}{\partial u \partial v}{F_{U,V}(u,v)}=c_{\theta}\big(F_U(u),F_V(v)\big)\,f_U(u)\,f_V(v)&\nonumber \\[.1cm]
&= \frac{-\theta \, \exp(-\theta F_U(u))\, \exp(-\theta F_V(v)) \, (\exp(-\theta)-1) \, f_U(u) \, f_V(v)}{((\exp(-\theta)-1)+(\exp(-\theta F_U(u))-1)\, (\exp(-\theta F_V(v))-1))^2}\,,
\end{align}
where $c_\theta(a,b):=\partial^2 / (\partial a \partial b) \, C_\theta(a,b)$ is the PDF of Frank copula. 

We derive the resulting PDF of the ratio $R$, an interpretable representation thereof and the CDF in the following three propositions.

\begin{proposition}\label{prop1} Let the PDF of $(U,V)$ be defined by \eqref{eq:Frankc}. Then the PDF of the ratio $R:=U/V$ is given by 
\begin{align}\label{eq:fR1}
f_R(r)=\int_{0}^{1}&{\left|F_V^{-1}(s)\right|\,c_\theta\left(F_U(r\,F_V^{-1}(s)),s\right)\,f_U(rF_V^{-1}(s))}\,ds\nonumber \\
=\int_{0}^{1}&\frac{\exp(-\theta\,F_U(r\,F_V^{-1}(s)))\,\exp(-\theta s)(-\theta)(\exp(-\theta)-1)}{((\exp(-\theta)-1)+(\exp(-\theta\,F_U(r\,F_V^{-1}(s)))-1)(\exp(-\theta s)-1))^2} \nonumber \\[.1cm]
&\times F_V^{-1}(s) \, f_U(r\,F_V^{-1}(s))\,ds\,,
\end{align}
where $|\cdot|$ denotes the absolute value function.
\end{proposition}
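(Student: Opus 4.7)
The plan is a two-step change of variables starting from the joint density \eqref{eq:Frankc}. First I would pass from $(U,V)$ to $(R,V)=(U/V,V)$; since $U=RV$, the Jacobian determinant is $v$, so that
\begin{equation*}
f_{R,V}(r,v) \;=\; v \, f_{U,V}(rv, v)
\end{equation*}
on $\{r>0, v>0\}$. Integrating out $v$ gives the intermediate representation
\begin{equation*}
f_R(r) \;=\; \int_0^\infty v \, f_{U,V}(rv,v)\, dv.
\end{equation*}

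Next I would apply the substitution $s = F_V(v)$, i.e.\ $v = F_V^{-1}(s)$, which is a valid one-to-one $C^1$ map from $(0,\infty)$ onto $(0,1)$ because the gamma CDF $F_V$ is strictly increasing and differentiable with $dv = ds/f_V(F_V^{-1}(s))$. Using Sklar's decomposition $f_{U,V}(u,v) = c_\theta(F_U(u),F_V(v))\,f_U(u)\,f_V(v)$ from \eqref{eq:Frankc} and evaluating at $(u,v) = (rF_V^{-1}(s), F_V^{-1}(s))$, the factor $f_V(F_V^{-1}(s))$ produced by the copula decomposition cancels against the Jacobian of the substitution. This leaves exactly
\begin{equation*}
f_R(r) \;=\; \int_0^1 F_V^{-1}(s)\, c_\theta\!\bigl(F_U(r F_V^{-1}(s)),\, s\bigr)\, f_U(r F_V^{-1}(s))\, ds,
\end{equation*}
which is the first line of \eqref{eq:fR1} (the absolute value in the stated formula is purely cosmetic, since $F_V^{-1}(s)>0$ for gamma-distributed $V$).

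Finally, to get the second displayed expression, I would substitute the explicit Frank copula density
\begin{equation*}
c_\theta(a,b) \;=\; \frac{-\theta\,(\exp(-\theta)-1)\,\exp(-\theta a)\,\exp(-\theta b)}{\bigl((\exp(-\theta)-1) + (\exp(-\theta a)-1)(\exp(-\theta b)-1)\bigr)^2},
\end{equation*}
obtained by differentiating \eqref{eq:FrankC} with respect to both arguments, evaluated at $a = F_U(rF_V^{-1}(s))$ and $b = s$, into the first integral. This step is entirely mechanical.

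The only genuine issue is justifying the change of variables and the interchange of integration implicit in the derivation of $f_R$ from $f_{R,V}$: this is routine because $f_{U,V}$ is nonnegative (so Tonelli applies) and $F_V$ is a strictly increasing diffeomorphism of $(0,\infty)$ onto $(0,1)$. No other step is nontrivial; the cancellation of $f_V(F_V^{-1}(s))$ with the substitution Jacobian is exactly what makes the copula formulation so clean.
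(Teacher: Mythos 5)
Your proof is correct. Where it differs from the paper is that the paper does not derive the first line of \eqref{eq:fR1} at all: it simply invokes Proposition~1 of \citet{ly2019}, which gives the density of a quotient $U/V$ for any pair of random variables coupled by an absolutely continuous copula, and then specializes. You instead reprove that general fact from scratch in the case at hand: the change of variables $(u,v)\mapsto(rv,v)$ with Jacobian $v$, marginalization over $v$, and the substitution $s=F_V(v)$ whose Jacobian $1/f_V(F_V^{-1}(s))$ cancels the factor $f_V$ coming from the Sklar decomposition. This is exactly the mechanism behind the cited result, so the two routes are mathematically the same argument at different levels of encapsulation; yours buys self-containedness and makes transparent why the copula density appears with the argument $s$ in the second slot and why no $f_V$ survives in the integrand, while the paper's citation buys brevity and immediate generality (e.g.\ the $F_V(0)$ and $\operatorname{sgn}(F_V^{-1}(s))$ terms in Ly et al.'s formula, which you correctly observe are trivial here because $V$ has positive support, so $|F_V^{-1}(s)|=F_V^{-1}(s)$). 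Your remarks on Tonelli and on $F_V$ being a strictly increasing diffeomorphism of $(0,\infty)$ onto $(0,1)$ cover the only technical points that need justification, and the final substitution of the Frank copula density from \eqref{eq:Frankc} is, as you say, mechanical and matches the second displayed line of \eqref{eq:fR1}.
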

\begin{proof} Proposition \ref{prop1} can be derived from Proposition 1 of \citet{ly2019}, who provided analytical results for the PDF of the quotient~$U/V$ of two random variables whose dependence structure can be described by an absolutely continuous copula. 
\end{proof}
\begin{figure}[!t]
\begin{center}
\includegraphics[width=0.75\textwidth]{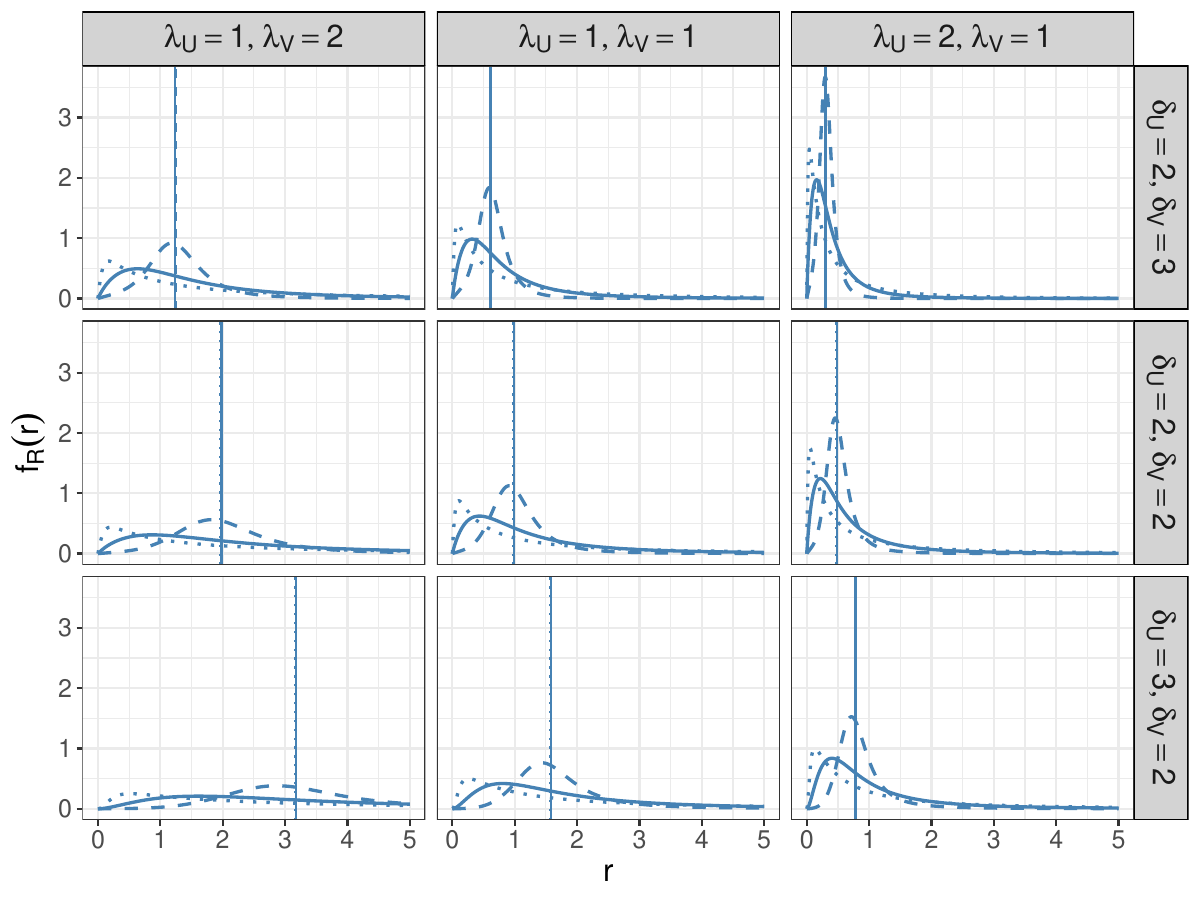}
\caption{Examples of the PDF\@ of $R=U/V$ derived in Proposition \ref{prop1} for parameters $\lambda_U, \lambda_V \in \{1,2\}$, $\delta_U, \delta_V \in \{2,3\}$ and $\theta \in \{-10,1,10\}$ (corresponding to rank correlation coefficients $\tau \in \{-0.67, 0.11, 0.67\}$). In each panel the three lines refer to $\theta=-10$ (dotted), $\theta=1$ (solid) and $\theta=10$ (dashed). Vertical lines refer to the median values of $R$.}
\label{fig:fR}
\end{center}
\end{figure} 
\begin{remark}
Figure~\ref{fig:fR} visualizes the PDF of $R$ for different values of the rate, shape and association parameters. The figure illustrates that the form of the PDF is strongly related to the ratio of  marginal means $\mathbb{E}(U)/\mathbb{E}(V)=\lambda_V\delta_U/\lambda_U\delta_V$, which is highest in the lower left panel ($\mathbb{E}(U)/\mathbb{E}(V)=3$) where the dispersion is very large, and lowest in the upper right panel ($\mathbb{E}(U)/\mathbb{E}(V)=1/3$) where the PDFs are heavily right-skewed. Figure~\ref{fig:fR} also describes the association between the PDF and the Kendall's rank correlation coefficient. In each of the nine cases the mode of the PDF increases as $\theta$ increases. Of note, the median of $R$ does not vary with $\theta$ (being equal for the three PDFs in each panel). 
\end{remark}

\begin{proposition}\label{prop2} Let the PDF of $(U,V)$ be defined by \eqref{eq:Frankc}. Then the PDF of the random variable $R$ in \eqref{eq:fR1} can be re-written as 
\begin{align}\label{eq:fR2}
f_R(r) = \int_{0}^{1}&{c_\theta\left(\frac{1}{\Gamma(\delta_U)}\,\gamma\left(\delta_U, r\,\Lambda\, \gamma^{-1}\left(\delta_V, \Gamma(\delta_V)s\right)\right),s\right)} \nonumber \\
& \times \, \frac{\Lambda^{\delta_U}\, r^{(\delta_U-1)}}{\Gamma(\delta_U)}\left(\gamma^{-1}\left(\delta_V, \Gamma(\delta_V)s\right)\right)^{\delta_U} \, \exp\left(-r\, \Lambda\, \gamma^{-1}\left(\delta_V, \Gamma(\delta_V)s\right)\right)\,ds\,,
\end{align}
where $\Lambda := \lambda_U / \lambda_V$ denotes the ratio of the two rate parameters and $\gamma(\cdot,\cdot)$ is the lower incomplete gamma function. 
\end{proposition}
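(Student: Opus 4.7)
My plan is to obtain the identity by direct substitution into the formula of Proposition~\ref{prop1}, using only the explicit form of the gamma CDF, its inverse, and the density $f_U$. Since $V$ is a positive random variable, $F_V^{-1}(s) > 0$ on $(0,1)$, so the absolute value in \eqref{eq:fR1} can be dropped; this is the first simplification I would note.

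The core step is to rewrite the gamma quantile function in terms of $\gamma^{-1}$. Since $F_V(v) = \gamma(\delta_V,\lambda_V v)/\Gamma(\delta_V)$, inverting gives
\begin{equation*}
F_V^{-1}(s) \;=\; \frac{1}{\lambda_V}\,\gamma^{-1}\!\bigl(\delta_V,\Gamma(\delta_V)\,s\bigr),
\end{equation*}
where $\gamma^{-1}(\delta_V,\cdot)$ denotes the inverse of $t\mapsto \gamma(\delta_V,t)$. Writing $w(s) := \gamma^{-1}(\delta_V,\Gamma(\delta_V)\,s)$, I would then evaluate $F_U(r\,F_V^{-1}(s))$ and $f_U(r\,F_V^{-1}(s))$ in terms of $w(s)$ and the ratio $\Lambda=\lambda_U/\lambda_V$: using $F_U(x)=\gamma(\delta_U,\lambda_U x)/\Gamma(\delta_U)$ yields
\begin{equation*}
F_U\!\bigl(r\,F_V^{-1}(s)\bigr) \;=\; \frac{1}{\Gamma(\delta_U)}\,\gamma\!\bigl(\delta_U,\,r\,\Lambda\,w(s)\bigr),
\end{equation*}
which delivers the first argument of the copula density in \eqref{eq:fR2}. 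Similarly, plugging $F_V^{-1}(s)=w(s)/\lambda_V$ into the gamma density $f_U$ produces a factor $(rw(s)/\lambda_V)^{\delta_U-1}$ and the exponential $\exp(-r\,\Lambda\,w(s))$.

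The only bookkeeping step left is to collect the prefactor $F_V^{-1}(s)\cdot f_U(r\,F_V^{-1}(s))$. Multiplying out gives
\begin{equation*}
\frac{w(s)}{\lambda_V}\cdot\frac{\lambda_U^{\delta_U}}{\Gamma(\delta_U)}\Bigl(\frac{r\,w(s)}{\lambda_V}\Bigr)^{\!\delta_U-1}\!\exp\!\bigl(-r\,\Lambda\,w(s)\bigr) \;=\; \frac{\Lambda^{\delta_U}\,r^{\delta_U-1}}{\Gamma(\delta_U)}\,w(s)^{\delta_U}\,\exp\!\bigl(-r\,\Lambda\,w(s)\bigr),
\end{equation*}
which is exactly the non-copula factor appearing in \eqref{eq:fR2}. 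Combined with the previously rewritten copula argument, this yields the claimed representation.

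I do not expect a genuine obstacle here: the result is a change-of-variable style reformulation that follows by substitution once the inverse gamma CDF is expressed via $\gamma^{-1}$. The only subtle point worth flagging is that $\gamma^{-1}(\delta_V,\cdot)$ is well-defined as a strictly increasing function on $[0,\Gamma(\delta_V)]$, so that $w(s)$ is unambiguously defined for $s\in(0,1)$; the positivity of $w(s)$ also confirms that the absolute value in Proposition~\ref{prop1} can be safely removed.
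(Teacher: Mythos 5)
Your proof is correct and follows essentially the same route as the paper's Appendix~A argument: expressing $F_V^{-1}(s)=\gamma^{-1}(\delta_V,\Gamma(\delta_V)s)/\lambda_V$, substituting into the formula from Proposition~\ref{prop1}, and collecting terms so that only the ratio $\Lambda=\lambda_U/\lambda_V$ remains. The algebraic bookkeeping for the prefactor $F_V^{-1}(s)\,f_U(r\,F_V^{-1}(s))$ matches the paper's computation exactly.
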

\noindent \textit{Proof.} The proof of Proposition \ref{prop2} is given in Appendix A.

\begin{remark} By Proposition~\ref{prop2} the PDF of $R$ can be written as a function of the ratio of the rate parameters $\Lambda = \lambda_U / \lambda_V$. This facilitates the interpretation of the proposed regression model, as it results in a sparser representation of the covariate effects. In particular, covariate effects can be investigated using one-dimensional hypothesis tests and p-values, see our application in Section~\ref{sec:app}. Figure~\ref{fig:med} illustrates how the median of $R$ is related to~$\Lambda$. It is seen that the median decreases with $\Lambda$ independent of the values of the shape parameters and the association parameter.
\end{remark}

\begin{proposition}\label{prop3}
Let the CDF of $(U,V)$ be defined by \eqref{eq:FrankC}. Then the CDF of the random variable $R$ is given by 
\begin{align}\label{eq:cdfr}
F_R(r)=\int_{0}^{1}\, {\frac{A\,f_U(r\,F_V^{-1}(s))\,\frac{r}{f_V(F_V^{-1}(s))} \, (\exp(-\theta s)-1)+(A-1)\exp(-\theta s)}{\exp(-\theta) +(A-1)\exp(-\theta s)- A }}\,ds\,,
\end{align}
where $A=\exp\left(-\theta\,F_U(rF_V^{-1}(s))\right)$.
\end{proposition}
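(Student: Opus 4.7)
The plan is to derive $F_R$ by conditioning on $V$, use the copula representation to express the conditional CDF of $U \mid V$ as a partial derivative of $C_\theta$, and specialize to the Frank copula. Since $V > 0$ almost surely, $\{R \le r\} = \{U \le r V\}$, so
\begin{equation*}
F_R(r) \;=\; \int_0^\infty F_{U|V}(rv \mid v)\, f_V(v) \, dv.
\end{equation*}
By Sklar's theorem applied to \eqref{eq:FrankC}, differentiating $F_{U,V}(u,v) = C_\theta(F_U(u), F_V(v))$ in $v$ and dividing by $f_V(v)$ gives $F_{U|V}(u \mid v) = (\partial C_\theta / \partial u_2)(F_U(u), F_V(v))$. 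The probability integral transform $s = F_V(v)$, which turns $f_V(v)\, dv$ into $ds$ and $v$ into $F_V^{-1}(s)$, then yields
\begin{equation*}
F_R(r) \;=\; \int_0^1 \left.\frac{\partial C_\theta(u_1, u_2)}{\partial u_2}\right|_{u_1 = F_U(r F_V^{-1}(s)),\, u_2 = s}\, ds.
\end{equation*}

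Next I would compute $\partial C_\theta / \partial u_2$ directly from \eqref{eq:FrankC}: differentiating the outer logarithm, the $-1/\theta$ prefactor cancels with the $-\theta$ from $\tfrac{d}{du_2} e^{-\theta u_2}$, leaving a rational function in $e^{-\theta u_1}$ and $e^{-\theta u_2}$. Rewriting the denominator via the identity $(e^{-\theta}-1)+(e^{-\theta u_1}-1)(e^{-\theta u_2}-1) = e^{-\theta} - e^{-\theta u_1} + (e^{-\theta u_1}-1)e^{-\theta u_2}$, and substituting $A = e^{-\theta F_U(rF_V^{-1}(s))}$ together with $u_2 = s$, reproduces the denominator $e^{-\theta} + (A-1)e^{-\theta s} - A$ of \eqref{eq:cdfr}.

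The step I expect to require the most care is recovering the full numerator of \eqref{eq:cdfr}. The conditional-CDF route above produces only the $(A-1)e^{-\theta s}$ summand; the additional term $A\, f_U(rF_V^{-1}(s)) \cdot r / f_V(F_V^{-1}(s)) \cdot (e^{-\theta s}-1)$ is precisely $(\partial C_\theta/\partial u_1) \cdot \tfrac{d}{ds} F_U(rF_V^{-1}(s))$, so the complete integrand is the total $s$-derivative of $s \mapsto C_\theta(F_U(rF_V^{-1}(s)), s)$. An alternative derivation that can naturally generate this form starts from Proposition~\ref{prop1}: write $F_R(r) = \int_0^r f_R(t)\, dt$, interchange the $t$- and $s$-integrals, and evaluate the inner $t$-integral via the substitution $w = F_U(tF_V^{-1}(s))$, which turns it into $\int_0^{F_U(rF_V^{-1}(s))} c_\theta(w, s)\, dw = \partial C_\theta(F_U(rF_V^{-1}(s)), s)/\partial u_2$. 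Reconciling the two routes, and tracking boundary contributions at $s \in \{0,1\}$ so that the total-derivative form truly integrates to $F_R(r)$ rather than to $C_\theta(1,1)-C_\theta(0,0)=1$, is the technical point I would work through most carefully.
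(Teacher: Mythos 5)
Your two derivations are both correct, and they agree with each other: conditioning on $V$ and the term-by-term integration of the density from Proposition~\ref{prop1} each give
\begin{equation*}
F_R(r)\;=\;\int_0^1 \left.\frac{\partial C_\theta(u_1,u_2)}{\partial u_2}\right|_{u_1=F_U(rF_V^{-1}(s)),\,u_2=s}\, ds
\;=\;\int_0^1 \frac{(A-1)\exp(-\theta s)}{\exp(-\theta)+(A-1)\exp(-\theta s)-A}\,ds\,,
\end{equation*}
i.e.\ the integrand contains only the \emph{second} summand of the numerator in \eqref{eq:cdfr}. The ``reconciliation'' you defer to the end cannot in fact be carried out: the full integrand of \eqref{eq:cdfr} is, exactly as you observe, the total $s$-derivative of $s\mapsto C_\theta\bigl(F_U(rF_V^{-1}(s)),s\bigr)$, and this map is continuous and increasing on $(0,1)$ with limits $C_\theta(0,0)=0$ as $s\to 0^+$ (since $F_V^{-1}(0^+)=0$) and $C_\theta(1,1)=1$ as $s\to 1^-$ (since $F_V^{-1}(1^-)=\infty$). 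There are no boundary contributions to rescue the total-derivative form; the right-hand side of \eqref{eq:cdfr} equals $1$ for every $r>0$ and therefore cannot be the CDF.

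The paper's own proof starts from the same identity you obtain by conditioning on $V$ (Equation~(9) of \citet{ly2019}, in which the derivative is the partial derivative of $C_\theta$ with respect to its second argument, evaluated at $u_1=F_U(rF_V^{-1}(s))$, $u_2=s$), but in the second display line of the proof it is applied as a total derivative of the composite function; that is precisely where the spurious first summand $A\,f_U(rF_V^{-1}(s))\,\tfrac{r}{f_V(F_V^{-1}(s))}\,(\exp(-\theta s)-1)$ enters. So your instinct was right: the statement should be corrected by dropping that summand, and your second route --- which never passes through a total derivative and directly produces $\int_0^{F_U(rF_V^{-1}(s))}c_\theta(w,s)\,dw=\partial C_\theta/\partial u_2$ --- is the cleaner way to present the result. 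As a sanity check of the corrected formula, note that $A\to\exp(-\theta)$ and the integrand tends to $1$ as $r\to\infty$, while $A\to 1$ and the integrand tends to $0$ as $r\to 0^+$.
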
 

\begin{proof} By Equation~(9) of \citet{ly2019}, the CDF of $R$ is derived as
\begin{align*}
F_R(r) \, & \, =\underbrace{F_V(0)}_{=0} \, + \, \int_{0}^{1}{\underbrace{\text{sgn}(F_V^{-1}(s))}_{=1} \, \frac{\partial}{\partial s} C_\theta\left(F_U(r\,F_V^{-1}(s)),s\right)}\,ds \\
&\,=-\frac{1}{\theta}\, \int_{0}^{1}{\frac{\partial}{\partial s}\log\left( 1+ \frac{\big(\exp (-\theta\,F_U(rF_V^{-1}(s)))-1\big) \, (\exp(-\theta s)-1)}{\exp(-\theta)-1}\right)}\,ds \\ 
&\, =\int_{0}^{1} \left( {\frac{\exp (-\theta\,F_U(rF_V^{-1}(s)) )\,f_U(r\,F_V^{-1}(s))\, \frac{r}{f_V(F_V^{-1}(s))} \, (\exp(-\theta s)-1)}{\exp(-\theta)-1+\big(\exp (-\theta\,F_U(rF_V^{-1}(s)) )-1 \big) \, (\exp(-\theta s)-1)}}\right.\nonumber \\[.1cm]
&\hspace{0.85cm}\left.+ \frac{ \big( \exp (-\theta\,F_U(rF_V^{-1}(s)) )-1 \big) \, \exp(-\theta s)}{\exp(-\theta)-1+\big(\exp (-\theta\,F_U(rF_V^{-1}(s)) )-1\big) \, (\exp(-\theta s)-1)}\right) ds\,,
\end{align*}
where $\text{sgn}(\cdot)$ is the sign function. Rearrangement of the last equation gives \eqref{eq:cdfr}.
\end{proof}

\begin{figure}[!t]
\begin{center}
\includegraphics[width=0.6\textwidth]{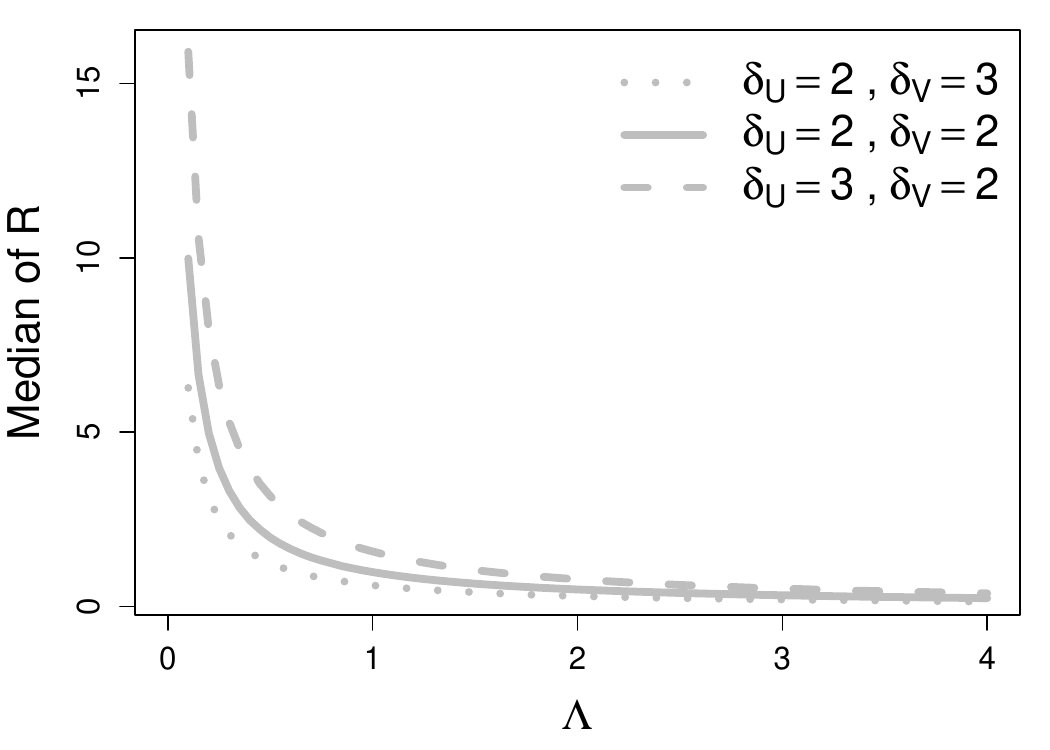}
\caption{Median of $R=U/V$ for parameters $\Lambda \in [0.1, 4]$ and $\delta_U, \delta_V \in \{2,3\}$, as calculated from the formula in Proposition~\ref{prop2}. Note that the median of $R$ does not vary with $\theta$.}
\label{fig:med}
\end{center}
\end{figure} 

\subsection{Regression specification and estimation}\label{sec:definition}  \vspace{.2cm}

To model the entire distribution of $R$ as a function of covariates $X=(X_1,\hdots,X_p)^\top$, we propose to relate both  the logarithmic rate parameters $\lambda_U$ and~$\lambda_V$ and the association parameter $\theta$ to predictor functions of the form
\begin{align}
\label{U}\log(\lambda_U|X)&=\eta_U=\beta_{U0}+\beta_{U1}X_1+\hdots+\beta_{Up}X_p\,,\\
\label{V}\log(\lambda_V|X)&=\eta_V=\beta_{V0}+\beta_{V1}X_1+\hdots+\beta_{Vp}X_p\quad \text{and}\\
\label{theta}\theta|X&=\eta_\theta=\beta_{\theta 0}+\beta_{\theta 1}X_1+\hdots+\beta_{\theta p}X_p\,,
\end{align}
where $\bs{\beta}_U=(\beta_{U0},\hdots,\beta_{Up})^\top$, $\bs{\beta}_V=(\beta_{V0},\hdots,\beta_{Vp})^\top$ and  $\bs{\beta}_\theta =(\beta_{\theta 0},\hdots,\beta_{\theta p})^\top$ are sets of real-valued coefficients. Analogous to classical gamma regression (Chapter 5.3 of \citealp{fahrmeir2022}), the use of the logarithmic transformation in~\eqref{U} and~\eqref{V} ensures positivity of the rate parameters. Since $\theta\in\mathbb{R} \setminus\{0\}$, no transformation is needed for the association parameter. As a result of~\eqref{U} and~\eqref{V} it holds that $\log(\Lambda|X)=\eta_U - \eta_V$.

\begin{remark}
    In principle, our approach allows to make use of the full flexibility of GAMLSS by relating all distributional parameters (including the shape parameters $\delta_U,\delta_V$) to the covariates and by including nonlinear effects in the predictor functions. However, in our application we found that the specification in \eqref{U} -- \eqref{theta} provides a sufficient fit, thereby meeting a compromise between model fit and model complexity. Furthermore, it greatly simplifies the interpretability of the results (as we will further elaborate in Section~\ref{sec:app}). Based on these considerations, our model assumes that the shape parameters $\delta_U$ and~$\delta_V$ do not depend on $X$, but can be treated as nuisance parameters.
\end{remark}
\begin{defin}
The regression model for the ratio $R=U/V$ with the distribution from Proposition \ref{prop3} and with covariate-dependent parameters as specified in \eqref{U}--\eqref{theta} will be termed \textit{Frank copula with Gamma Distributed Marginals (FCGAM)} in the following. The FCGAM model imposes the constraint $\delta_U$, $\delta_V$ $>1$ to ensure that the two marginals both exhibit a unimodal, right-skewed distribution, which is the common form of biomarker distributions in medical applications. 
\end{defin}

\begin{cor}
For a set of i.i.d. observations $(u_1,v_1,\bs{x}_1^\top)^\top,\ldots,(u_n,v_n,\bs{x}_n^\top)^\top$ with ratios $r_1=u_1/v_1,\ldots,r_n=u_n/v_n$ and model coefficients $\bs{\gamma}=\left(\bs{\beta}_U^\top,\bs{\beta}_V^\top,\bs{\beta}_\theta^\top,\delta_U,\delta_V\right)^\top$, the log-likelihood function of the FCGAM model is given by 
\begin{align}
\label{eq:loglik}
\ell(\bs{\beta}_U,\bs{\beta}_V,&\bs{\beta}_\theta,\delta_U,\delta_V;u_1,\hdots,u_n,v_1,\hdots,v_n,\bs{x}_1,\hdots,\bs{x}_n)\nonumber \\
=\sum_{i=1}^n\,\bigg\{&\log\Big(f_{U,V}(u_i,v_i|\bs{x}_i,\bs{\beta}_U,\bs{\beta}_V,\bs{\beta}_\theta,\delta_U,\delta_V)\Big)\bigg\}\nonumber\\
=\sum_{i=1}^n\,\bigg\{&\log\Big(\exp(-\bs{x}_i^\top\bs{\beta}_\theta\,F_U(u_i;\bs{x}_i^\top\bs{\beta}_U,\delta_U))\exp(-\bs{x}_i^\top\bs{\beta}_\theta\,F_V(v_i;\bs{x}_i^\top\bs{\beta}_V,\delta_V))\nonumber\\
&\ \times \ (-\bs{x}_i^\top\bs{\beta}_\theta)(\exp(-\bs{x}_i^\top\bs{\beta}_\theta)-1)f_U(u_i;\bs{x}_i^\top\bs{\beta}_U,\delta_U)\,f_V(v_i;\bs{x}_i^\top\bs{\beta}_V,\delta_V)\Big)\nonumber\\
-2\,&\log \,\Big((\exp(-\bs{x}_i^\top\bs{\beta}_\theta)-1)+(\exp(-\bs{x}_i^\top\bs{\beta}_\theta\,F_U(u_i;\bs{x}_i^\top\bs{\beta}_U,\delta_U))-1)\nonumber\\
&\ \times \ (\exp(-\bs{x}_i^\top\bs{\beta}_\theta\,F_V(v_i;\bs{x}_i^\top\bs{\beta}_V,\delta_V))-1)\Big)\bigg\}\,.
\end{align}
\end{cor}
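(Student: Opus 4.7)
The statement is essentially a direct bookkeeping exercise, so the plan is short. First, I would invoke independence of the observations to write the full likelihood as a product and hence the log-likelihood as a sum of individual log-densities: $\ell = \sum_{i=1}^n \log f_{U,V}(u_i,v_i\mid \bs{x}_i,\bs{\gamma})$. This gives the first equality in the stated expression essentially by definition of the i.i.d. likelihood, and requires no further argument beyond citing the joint density from \eqref{eq:Frankc}.

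Second, I would substitute the Frank copula density \eqref{eq:Frankc} evaluated at $(u_i,v_i)$, together with the regression specifications \eqref{U}--\eqref{theta}. Concretely, each appearance of $\theta$ in \eqref{eq:Frankc} is replaced by $\bs{x}_i^\top\bs{\beta}_\theta$, while the marginal CDFs and PDFs become $F_U(u_i;\bs{x}_i^\top\bs{\beta}_U,\delta_U)$, $F_V(v_i;\bs{x}_i^\top\bs{\beta}_V,\delta_V)$, $f_U(u_i;\bs{x}_i^\top\bs{\beta}_U,\delta_U)$, $f_V(v_i;\bs{x}_i^\top\bs{\beta}_V,\delta_V)$, since the rate parameters are driven by the covariates via the log-link while the shape parameters are treated as (covariate-independent) nuisance parameters per the Definition of FCGAM.

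Third, I would split the logarithm using $\log(a/b^2) = \log a - 2\log b$. The numerator of \eqref{eq:Frankc}, containing the product of $-\theta$, the two exponentials $\exp(-\theta F_U)$ and $\exp(-\theta F_V)$, the factor $(\exp(-\theta)-1)$, and the two marginal densities $f_U,f_V$, yields the first logarithmic summand in \eqref{eq:loglik}; the squared denominator $((\exp(-\theta)-1)+(\exp(-\theta F_U)-1)(\exp(-\theta F_V)-1))^2$ yields the $-2\log(\cdot)$ summand. Summing over $i=1,\ldots,n$ gives the stated expression.

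There is no genuine obstacle here. The only point requiring a little care is that the factor $\exp(-\theta)-1$ appears in both the numerator of \eqref{eq:Frankc} and (unsquared) in the denominator, so one must not accidentally cancel it before taking logs; keeping them separate is what produces the clean split into the additive first term and the $-2\log$ second term in \eqref{eq:loglik}. Apart from this, the result is obtained by plain substitution and the algebra of logarithms.
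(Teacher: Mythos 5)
Your proposal is correct and matches the paper's (implicit) reasoning: the corollary is stated without a written proof precisely because it follows by the i.i.d.\ factorization, substitution of the covariate-dependent parameters into the joint density \eqref{eq:Frankc}, and the elementary split $\log(a/b^2)=\log a - 2\log b$. Your remark about not cancelling the unsquared factor $\exp(-\theta)-1$ against the numerator before taking logarithms is the only point of care, and you handle it correctly.
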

\begin{cor}\label{cor2}
Under the usual regularity assumptions, the estimator
\begin{align}
\hat{\bs{\gamma}}=(\hat{\bs{\beta}}^\top_U,&\hat{\bs{\beta}}^\top_V,\hat{\bs{\beta}}^\top_\theta,\hat{\delta}_U,\hat{\delta}_V)^\top := \nonumber \\
&\underset{\bs{\gamma}=\bs{\beta}_U,\bs{\beta}_V,\bs{\beta}_\theta,\delta_U,\delta_V}{\text{argmax}}\ell(\bs{\beta}_U,\bs{\beta}_V,\bs{\beta}_\theta,\delta_U,\delta_V;u_1,\hdots,u_n,v_1,\hdots,v_n,\bs{x}_1,\hdots,\bs{x}_n)\,
\end{align}
is consistent and asymptotically normal for $n\to\infty$.
\end{cor}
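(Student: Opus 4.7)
The plan is to invoke the classical maximum likelihood theorem of Cram\'er (see, e.g., Chapter~6 of Lehmann and Casella, \emph{Theory of Point Estimation}) after verifying the standard regularity conditions for the FCGAM log-likelihood in~\eqref{eq:loglik}. Writing $\ell_i(\bs{\gamma}) := \log f_{U,V}(u_i,v_i \mid \bs{x}_i,\bs{\gamma})$, so that $\ell(\bs{\gamma}) = \sum_{i=1}^n \ell_i(\bs{\gamma})$, the conditions to check are: (C1) identifiability of $\bs{\gamma}$; (C2) the true parameter $\bs{\gamma}_0$ lies in the interior of an open parameter space $\Theta \subset \mathbb{R}^{3p+5}$; (C3) three-times continuous differentiability of $\ell_i$ in $\bs{\gamma}$ on a neighborhood of $\bs{\gamma}_0$; (C4) integrability of the score, dominated third derivatives, and interchangeability of expectation and differentiation; (C5) positive-definiteness of the Fisher information $I(\bs{\gamma}_0) := -\mathbb{E}[\nabla_{\bs{\gamma}}^2 \ell_i(\bs{\gamma}_0)]$. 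Once (C1)--(C5) are established, consistency and asymptotic normality with covariance $I(\bs{\gamma}_0)^{-1}$ follow from Cram\'er's theorem adapted to the independent but non-identically distributed setting (the covariates $\bs{x}_i$ differ across $i$, so one invokes the Lindeberg--Feller variant as in Chapter~5 of van der Vaart, \emph{Asymptotic Statistics}).

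For (C3) and (C4), observe that the gamma PDFs $f_U,f_V$ are analytic in $(\lambda_U,\lambda_V,\delta_U,\delta_V)$ on $(0,\infty)^2 \times (1,\infty)^2$, their CDFs $F_U,F_V$ inherit analyticity through the lower incomplete gamma function, and Frank's copula density $c_\theta$ is analytic in $\theta \in \mathbb{R}\setminus\{0\}$ (the apparent pole at $\theta=0$ is removable, yielding the independence copula in the limit). Composing these with the linear predictors in~\eqref{U}--\eqref{theta} gives a log-likelihood that is smooth on the complement of the hyperplane $\{\bs{x}^\top \bs{\beta}_\theta = 0\}$; since the true parameter satisfies $\theta_0(\bs{x}) \neq 0$ on a set of positive covariate probability (otherwise the problem reduces to the independent case), condition (C3) holds on a neighborhood of $\bs{\gamma}_0$. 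The required moment and domination bounds in (C4) follow from the exponential tails of the gamma marginals and the uniform boundedness of $c_\theta$ on compact $\theta$-sets bounded away from $0$; a standard argument of bounding $|\partial^k \ell_i / \partial \bs{\gamma}^k|$ by a polynomial in $(u_i,v_i,\log u_i,\log v_i)$ times $\|\bs{x}_i\|^k$ and invoking finite gamma moments delivers integrable envelopes.

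The main obstacle will be identifiability (C1) together with positive-definiteness of the Fisher information (C5). For (C1), the two marginals of the joint FCGAM density are themselves gamma laws with parameters $(\lambda_U(\bs{x}),\delta_U)$ and $(\lambda_V(\bs{x}),\delta_V)$, and the gamma family is identifiable in $(\lambda,\delta)$; standard arguments then yield identifiability of $(\bs{\beta}_U,\delta_U)$ and $(\bs{\beta}_V,\delta_V)$ provided the design matrix $[\bs{x}_1,\ldots,\bs{x}_n]^\top$ has full column rank in the limit (a standard assumption). Given the marginals, Proposition~\ref{prop1} together with the strict monotonicity of the map $\theta \mapsto \tau(\theta)$ in~\eqref{eq:tau} shows that distinct $\theta$ yield distinct copula densities, so $\bs{\beta}_\theta$ is identified under the same full-rank condition on the design. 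Condition (C5) then follows from (C1) and (C3) by a standard argument: if $I(\bs{\gamma}_0)$ were singular, there would exist a non-zero direction $\bs{v}$ along which $\bs{v}^\top \nabla_{\bs{\gamma}} \ell_i(\bs{\gamma}_0) = 0$ almost surely, contradicting identifiability via the information equality $I(\bs{\gamma}_0) = \mathbb{E}[\nabla \ell_i \nabla \ell_i^\top]$.

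With (C1)--(C5) in place, existence of a consistent root of the score equations $\nabla_{\bs{\gamma}} \ell(\hat{\bs{\gamma}}) = 0$ follows from Cram\'er's theorem, and a second-order Taylor expansion of the score around $\bs{\gamma}_0$ combined with the Lindeberg--Feller CLT applied to the i.n.i.d.\ scores yields $\sqrt{n}(\hat{\bs{\gamma}} - \bs{\gamma}_0) \xrightarrow{d} \mathcal{N}(\bs{0}, I(\bs{\gamma}_0)^{-1})$, completing the argument.
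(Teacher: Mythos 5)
The paper offers no proof of this corollary at all: it is stated conditionally on ``the usual regularity assumptions'' and left as a direct appeal to classical maximum-likelihood asymptotics. Your proposal follows exactly that route, so in spirit you are doing what the authors intend, only with the Cram\'er conditions written out and partially verified. The verification is sound except for one step that fails in general: you deduce positive-definiteness of the Fisher information (C5) from identifiability (C1) via the information equality, arguing that a singular $I(\bs{\gamma}_0)$ would force $\bs{v}^\top\nabla_{\bs{\gamma}}\ell_i(\bs{\gamma}_0)=0$ almost surely and hence contradict identifiability. A vanishing directional score only shows that the likelihood is flat to first order in the direction $\bs{v}$; it does not contradict identifiability, even locally --- the textbook counterexample is the model $N(\mu^3,1)$ at $\mu=0$, which is identifiable yet has zero Fisher information, and whose MLE is consistent but not $\sqrt{n}$-asymptotically normal there. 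So (C5) must be assumed (as one of the ``usual regularity assumptions'' the corollary already invokes, together with a limiting full-rank condition on the design) rather than derived from (C1). With that emendation the argument is complete. Two further minor remarks: since the paper takes the triples $(u_i,v_i,\bs{x}_i^\top)^\top$ to be i.i.d., the plain i.i.d.\ CLT suffices and the Lindeberg--Feller detour for fixed designs is optional; and the interior-point condition (C2) quietly rules out true shape parameters on the boundary $\delta_U=1$ or $\delta_V=1$ of the FCGAM constraint set, which is worth stating explicitly since the model imposes $\delta_U,\delta_V>1$.
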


\noindent {\it Implementational details.} Maximization of the log-likelihood function in~\eqref{eq:loglik} can be carried out using the~R function \texttt{FCGAMoptim()}, which is part of the supplementary material to this paper. The optimization algorithm is based on the Broyden-Fletcher-Goldfarb-Shanno (BFGS) algorithm implemented in the R function \texttt{optim()}, setting the additional constraint $\delta_U$, $\delta_V$ $>1$. 

\subsection{Prediction of distributional parameters and inference}\label{sec:fe_ci}  \vspace{.2cm}

{\it Prediction.} For a new observation with covariate values $\tilde{\bs{x}}$, predictions of the conditional PDF ${f}_R(r|\tilde{\bs{x}})$ can be obtained by computing the maximum likelihood estimate (MLE) and by plugging the estimated parameters $\hat{\Lambda}\,|\, \tilde{\bs{x}}=\exp(\tilde{\bs{x}}^\top\hat{\bs{\beta}}_U-\tilde{\bs{x}}^\top\hat{\bs{\beta}}_V)$, $\hat{\theta}\,|\,\tilde{\bs{x}}=\tilde{\bs{x}}^\top\hat{\bs{\beta}}_\theta$ and $\hat{\delta}_U$, $\hat{\delta}_V$ in Equation~\eqref{eq:fR2}. The predicted PDF can then be used to predict any distributional parameter of interest (like the expected value, median or quantiles). For example, denoting the predicted PDF by $\hat{f}_R(r|\tilde{\bs{x}})$, the predicted median can be calculated by \begin{align}\label{eq:medR}
\hat{r}_{\text{med}} \,|\, \tilde{\bs{x}}  = \text{min}\bigg\{r\in \mathbb{R}^+ \, \Big| \, \int_{0}^{r}\hat{f}_R(s|\tilde{\bs{x}})\, ds \geq 0.5\bigg\}\,.
\end{align}

\noindent {\it Inference.} Despite the asymptotic results from Corollary \ref{cor2}, more reliable finite-sample confidence intervals have been established in additive models \citep{wood2017}. This is particularly the case for the quantities of interest here (such as the median of $R$ above). The reason is that these are nonlinear transformations of the original model coefficients such that confidence intervals would show an additional finite-sample bias due to the application of the $\Delta$-rule. 
Following \citet{wood2017}, we thus propose to construct confidence intervals of $\bs{\gamma}$ using a Bayesian approach, which we accordingly refer to as credible intervals. Assuming flat priors on $\bs{\gamma}$, the posterior distribution of $\bs{\gamma}$ is given by
\begin{align}
\label{eq:posterior}
\bs{\gamma} \, | \, u_1,\hdots,u_n,v_1,\hdots,v_n \, \sim \, N\big(\hat{\bs{\gamma}},\hat{J}^{-1}(\hat{\bs{\gamma}})\big)\,,
\end{align}
where $\hat{J}(\hat{\bs{\gamma}})$ is the Hessian of the negative log-likelihood evaluated at $\hat{\bs{\gamma}}$ (Equation~(6.26) of \citealp{wood2017}). 
Consequently, approximate $(1-\alpha)\%$ credible intervals for the coefficients $\bs{\gamma}$ can be obtained by drawing a large sample from the posterior distribution \eqref{eq:posterior} and by calculating the $\alpha/2$ and $(1-\alpha/2)$ percentiles from this sample \citep[][p.~293]{wood2017}. In our simulations (Section \ref{sec:sim}) and in the analysis of the DCN study data (Section \ref{sec:app}) we used samples of size 10,000 throughout.

\section{Simulations}
\label{sec:sim}

We conducted three simulation studies to investigate the performance of the FCGAM model. Our main aims were (a) to analyze the model fit and the coverage  of the credible intervals, (b) to evaluate how the performance of the FCGAM approach is affected by the sample size and the choice of the association parameter~$\theta$, and (c) to benchmark our method against alternative ones, in particular against the extended GB2 model by \cite{berger2019} which assumes the correlation between $U$ and $V$ to be positive.

\subsection{Experimental design}  \vspace{.2cm}

In all simulations the ratio outcome was generated according to the PDF of the FCGAM model derived in Proposition \ref{prop1}. Similar to the application data in Section~\ref{sec:app}, we considered two standard normally distributed covariates $X_1,X_2 \sim N(0,1)$ and two binary covariates $X_3,X_4 \sim B(1, 0.5)$, which were equi-correlated with Pearson correlation coefficient 0.4. For each $n \in \{200, 500, 1000\}$ we simulated $1000$ independent data sets. 

In \textit{Simulation Study~1}, we considered scenarios with fixed {\em negative} correlation (the case which motivated our development of the FCGAM model), setting $\beta_{\theta0} \in \{-1, -5, -10\}$ and $\beta_{\theta1}=\hdots=\beta_{\theta4}=0$. This resulted in the respective rank correlation coefficients $\tau \in \{-0.11, -0.46, -0.67\}$. The rate parameters were related to the four covariates through the coefficients $\beta_U=(0,0.4,-0.4,0.2,-0.2)^\top$ and $\beta_V=(0,-0.2, 0.2, -0.4, 0.4)^\top$. The shape parameters were set to $\delta_U=2$ and $\delta_V=6$ in all scenarios. 

In \textit{Simulation Study~2}, we considered scenarios with fixed {\em positive} correlation (the case which has already been covered by the eGB2 model but also applies to the FCGAM model), setting $\beta_{\theta0} \in \{1, 5, 10\}$ and $\beta_{\theta1}=\hdots=\beta_{\theta4}=0$. This resulted in the respective rank correlation coefficients $\tau \in \{0.11, 0.46, 0.67\}$. To ensure that the outcome values were in a meaningful range (comparable to \textit{Simulation Study~1}) we set the regression coefficients to $\beta_U=(0,0.4,-0.4,0.2,-0.2)^\top$ and $\beta_V=(0,0.2,-0.2,0.4,-0.4)^\top$, and the shape parameters to $\delta_U=2$ and $\delta_V=2$.

In \textit{Simulation Study 3}, we evaluated how the model fit of the FCGAM model was affected when falsely assuming a dependence of $\theta$ on $X_1,\hdots,X_4$, or when ignoring a present dependence of $\theta$ on $X_1,\hdots,X_4$. For this we reconsidered the data sets from \textit{Simulation Study~1} with $\tau= -0.11$, and additionally considered scenarios where  the association parameter $\theta$ was related to the four covariates through the coefficient vector $\beta_\theta = (0, 1, -1, 0.5, -0.5)^\top$ (resulting in covariate-dependent rank correlation coefficients $\tau_i$, 
with the remaining parameters as in \textit{Simulation Study~1}). In both cases we fitted the FCGAM model with covariate-dependent $\theta$ (according to~\eqref{theta}) and with constant $\theta = \beta_{\theta0}$. \\

\noindent {\it Benchmark methods.} We evaluated the fits of the 1000 FCGAM models by computing the predictive log-likelihood values on 1000 independent test data sets. The test data sets (of size $n$ each) were also used to compare the FCGAM model to alternative models. To this purpose, we evaluated the predictive log-likelihood values of the following benchmark methods, where (ii), (iii) and (vi) are univariate regression models for $R$, (iv) is a univariate regression model for $\log(R)$, and (v) and (vii) are distributional regression models:\\

\begin{minipage}{0.1\textwidth}

\end{minipage}
\begin{minipage}{0.96\textwidth}
\begin{itemize}
\setlength{\itemsep}{-0.25em}
    \item[(i)] The copula-based \textit{FCGAM} model introduced in Section \ref{sec:definition}.
    \item[(ii)] The extended GB2 model \citep{berger2019} (\textit{eGB2}) assuming a positive correlation between $U$ and~$V$. 
    \item[(iii)] The simple GB2 model (\textit{GB2}) assuming zero Pearson correlation between~$U$ and~$V$.
    \item[(iv)] A  Gaussian regression model with log-transformed outcome values (\textit{LN}). 
    \item[(v)] A Gaussian GAMLSS with log-transformed outcome values, where both the mean and the standard deviation were related to the covariates (\textit{LN.LSS}). The standard deviation was modeled using the \mbox{log link}.
    \item[(vi)] A Gamma regression model with the original outcome values (\textit{GA}). The mean parameter was related to the covariates and was modeled using the log link.
    \item[(vii)] A Gamma GAMLSS with the original outcome values, where both the mean and the scale parameters were related to the covariates (\textit{GA.LSS}) using the log link. 
\end{itemize}
\end{minipage}

\subsection{Results} \vspace{.2cm}

{\it Point estimates of the FCGAM coefficients.}
Figure~\ref{fig:sim1:estx} presents the coefficient estimates $\hat{\beta}_U$ in \textit{Simulation Study~1} with negative (but covariate-independent) correlation between $U$ and $V$. The boxplots show that on average the estimated coefficients are very close to the true ones, regardless of the association parameter~$\theta$. Accordingly, the finite-sample bias of the MLEs is small in all scenarios (with varying $n$ and $\theta$). From Figure~\ref{fig:sim1:estx} it can also be seen that, as expected, the variance of the estimates decreases with increasing sample size, in particular for the two binary covariates $X_3$ and $X_4$. In contrast, the correlation (determined by the value of~$\theta$) has only a small impact on the variance of the estimates. The coefficient estimates $\hat{\beta}_V$ (presented in~Supplementary Figure~S1) exhibit even smaller variances in the scenarios with $n=500$ and $n=1000$. 

The coefficient estimates $\hat{\beta}_U$ and $\hat{\beta}_V$ from \textit{Simulation Study 2} with positive correlation between $U$ and $V$ are shown in Supplementary Figures~S2 and~S3, respectively. In both cases the bias is small throughout all scenarios. Regarding the variance of the estimates, the results are largely the same as in Figure~\ref{fig:sim1:estx}. \\

\begin{figure}[!t]
\begin{center}
\includegraphics[width=0.8\textwidth]{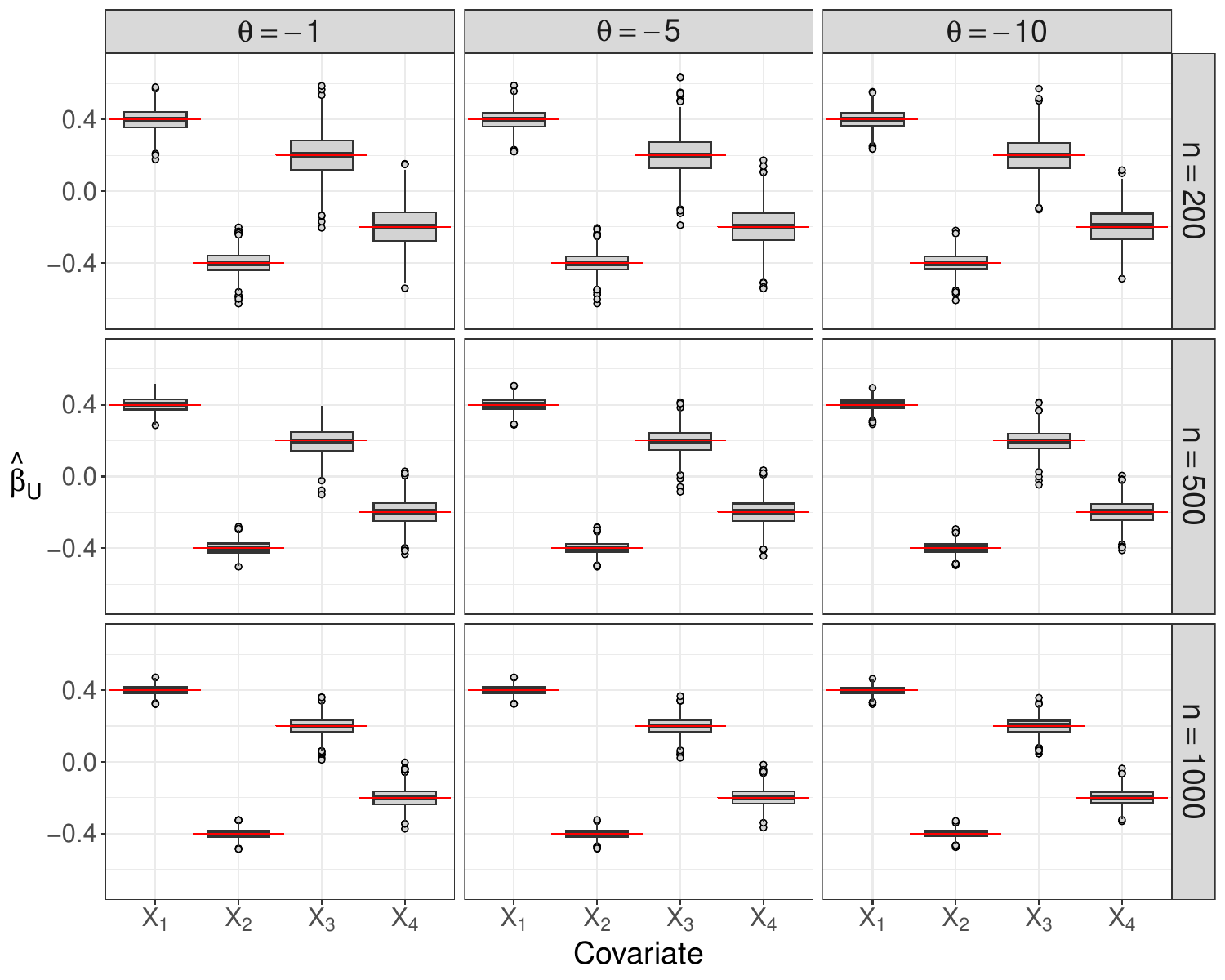}
\caption{Point estimates of the FCGAM coefficients in \textit{Simulation Study 1}. The boxplots visualize the MLEs of the coefficients $\beta_{U1}=0.4$, $\beta_{U2}=-0.4$, $\beta_{U3}=0.2$ and $\beta_{U4}-0.2$ that were obtained from fitting the FCGAM model to 1000 data sets of size $n$ each. The red lines refer to the true values of the coefficients.}
\label{fig:sim1:estx}
\end{center}
\end{figure} 

\noindent {\it Coverage proportions of the credible intervals.} The coverage proportions of the 95\% credible intervals obtained from the FCGAM fits are presented in Table~\ref{tab:sim:coverage}. They range between $0.928 - 0.958$ (\textit{Simulation Study 1}) and between $0.928 - 0.962$ (\textit{Simulation Study 2}), which is close to the nominal coverage of 95\%. There were only minor differences with regard to sample size and the correlation coefficient. This result demonstrates that not only point estimation but also inference works well for highly positive or negative correlations and fairly small samples. \\

\begin{table}[!t]
\caption{Coverage proportions of the FCGAM credible intervals. For each coefficient $\beta_{Uj}$, $j=1,\hdots,4,$ and $\beta_{Vj}$, $j=1,\hdots,4,$ the table contains the coverage proportion of the 95\% credible interval, as obtained from fitting the FCGAM model to 1000 independent data sets of size $n$ each. 
}
\begin{center}
\begin{footnotesize}
\begin{tabular}{llcccccccc}
\toprule
\multicolumn{2}{c}{\textbf{Simulation Study 1}}&$\beta_{U1}$&$\beta_{U2}$&$\beta_{U3}$&$\beta_{U4}$&$\beta_{V1}$&$\beta_{V2}$&$\beta_{V3}$&$\beta_{V4}$\\
\midrule
$n$=200&$\theta=-1$&0.938&0.932&0.936&0.938&0.949&0.949&0.954&0.950\\
&$\theta=-5$&0.958&0.942&0.955&0.941&0.943&0.935&0.946&0.952\\
&$\theta=-10$&0.937&0.938&0.932&0.947&0.952&0.946&0.940&0.949\\
\midrule
$n$=500&$\theta=-1$&0.949&0.937&0.940&0.938&0.952&0.953&0.946&0.930\\
&$\theta=-5$&0.954&0.938&0.928&0.935&0.952&0.932&0.935&0.949\\
&$\theta=-10$&0.934&0.948&0.939&0.951&0.955&0.947&0.958&0.943\\
\midrule
$n$=1000&$\theta=-1$&0.949&0.937&0.955&0.947&0.944&0.946&0.948&0.937\\
&$\theta=-5$&0.947&0.942&0.936&0.936&0.957&0.950&0.950&0.950\\
&$\theta=-10$&0.933&0.945&0.934&0.953&0.943&0.948&0.949&0.943\\
\midrule \\[-.3cm]
\midrule
\multicolumn{2}{c}{\textbf{Simulation Study 2}}&$\beta_{U1}$&$\beta_{U2}$&$\beta_{U3}$&$\beta_{U4}$&$\beta_{V1}$&$\beta_{V2}$&$\beta_{V3}$&$\beta_{V4}$\\
\midrule
$n$=200&$\theta=1$&0.935&0.929&0.934&0.935&0.947&0.933&0.959&0.954\\
&$\theta=5$&0.952&0.940&0.954&0.941&0.949&0.951&0.954&0.947\\
&$\theta=10$&0.938&0.939&0.939&0.947&0.951&0.944&0.948&0.953\\
\midrule
$n$=500&$\theta=1$&0.941&0.948&0.936&0.928&0.940&0.942&0.946&0.940\\
&$\theta=5$&0.948&0.941&0.929&0.929&0.952&0.944&0.941&0.962\\
&$\theta=10$&0.931&0.947&0.937&0.951&0.948&0.944&0.949&0.937\\
\midrule
$n$=1000&$\theta=1$&0.950&0.944&0.954&0.944&0.952&0.939&0.953&0.935\\
&$\theta=5$&0.955&0.942&0.934&0.934&0.956&0.952&0.949&0.947\\
&$\theta=10$&0.928&0.946&0.934&0.955&0.935&0.956&0.948&0.945\\
\bottomrule
\end{tabular}
\end{footnotesize}
\end{center}
\label{tab:sim:coverage}
\end{table}

\noindent {\it Distributional prediction.} The root mean squared error (RMSE) of the estimated conditional median values computed from \eqref{eq:medR} are given in Table~\ref{tab:sim:median}. In \textit{ Simulation Study 1} the performance is quite similar for all three values of $\theta$, whereas in \textit{Simulation Study~2} the RMSE considerably decreases with increasing value of $\theta$. This indicates that estimating the median value works best for highly positive correlations where the PDF of $R$ is rather diffuse with a large mode value (compare Figure~\ref{fig:fR}). It is also seen from Table~\ref{tab:sim:median} that the means and the standard deviations of the RMSE decrease with increasing sample size.\\

\begin{table}[!t]
\caption{RMSE of the estimated conditional median values of $R$. The table presents the mean RMSE of the estimated conditional median of $R$, as obtained from fitting the FCGAM model to 1000 independent data sets of size $n$ each. Standard deviations of the RMSE values (across the 1000 data sets) are given in brackets. 
}
\begin{center}
\begin{footnotesize}
\begin{tabular}{llccc}
\toprule
\multicolumn{2}{c}{\textbf{Simulation Study 1}} &$\theta=-1$&$\theta=-5$&$\theta=-10$\\
\midrule
$n$ = 200&&0.076 (0.035)&0.084 (0.039)&0.081 (0.037)\\
$n$ = 500&&0.049 (0.021)&0.053 (0.023)&0.051 (0.021)\\
$n$ = 1000&&0.034 (0.015)&0.038 (0.016)&0.036 (0.016)\\
\midrule \\[-.3cm]
\midrule
\multicolumn{2}{c}{\textbf{Simulation Study 2}} &$\theta=1$&$\theta=5$&$\theta=10$\\
\midrule
$n$ = 200&&0.154 (0.056)&0.101 (0.035)&0.063 (0.021)\\
$n$ = 500&&0.097 (0.033)&0.064 (0.022)&0.040 (0.014)\\
$n$ = 1000&&0.069 (0.023)&0.046 (0.015)&0.029 (0.010)\\
\bottomrule
\end{tabular}
\end{footnotesize}
\end{center}
\label{tab:sim:median}
\end{table}

\begin{figure}[!t]
\begin{center}
\includegraphics[width=0.8\textwidth]{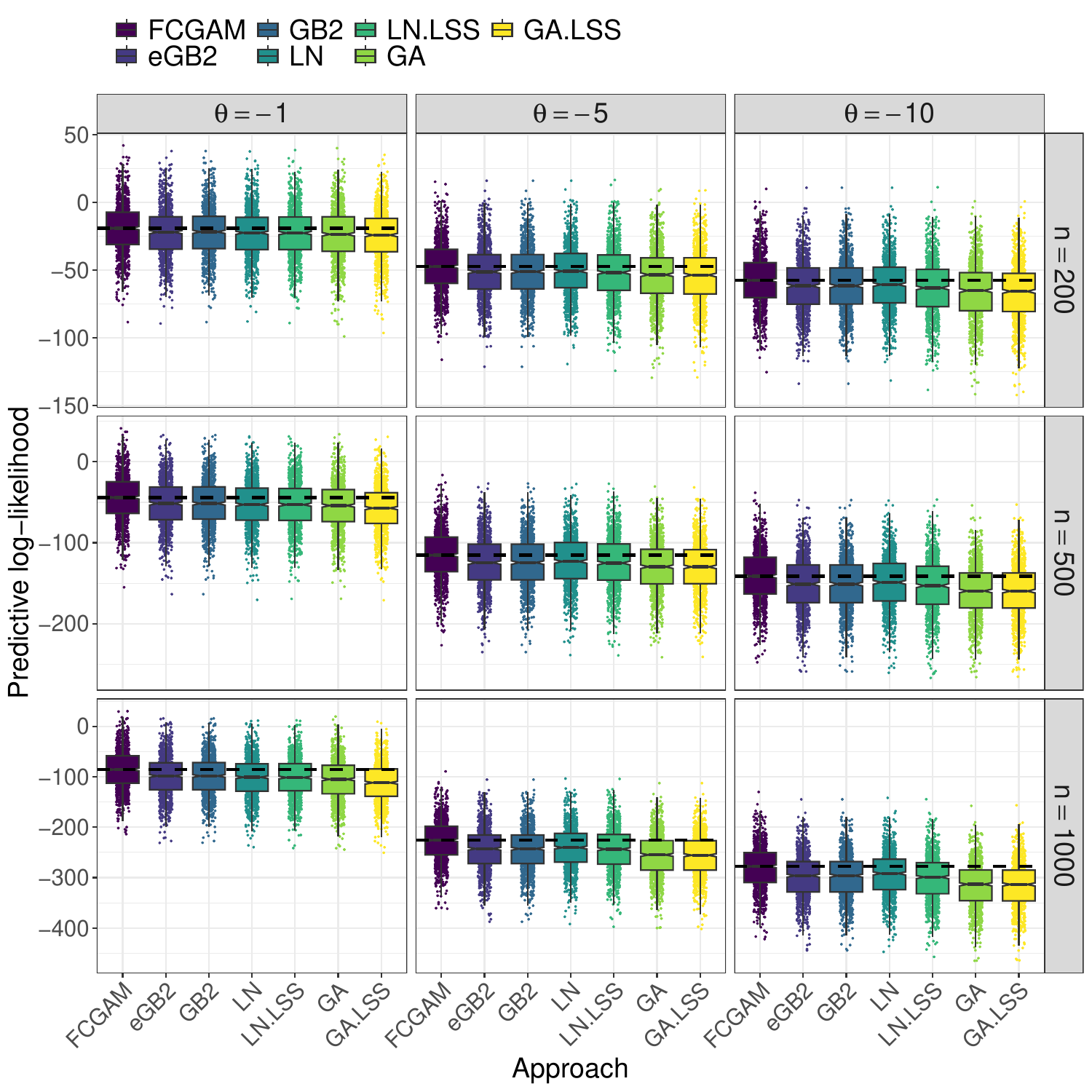}
\caption{Comparison of the FCGAM model to alternative methods in \textit{Simulation Study 1}. The boxplots visualize the predictive log-likelihood values obtained from the FCGAM model and from the benchmark methods (ii) to (vii). All models were fitted to 1000 independent data sets and evaluated on independently generated test data sets of the same size. In each panel, the dashed horizontal line indicates the median predictive log-likelihood value of the best performing method.}
\label{fig:sim1:ll}
\end{center}
\end{figure} 

\noindent {\it Comparison to alternative models.} Figure~\ref{fig:sim1:ll} and Supplementary Figure~S4 show the prediction accuracy (i.e.~the predicted log-likelihood values on the test sets) of the FCGAM model and the benchmark methods (ii) to (vi). In \textit{Simulation Study~1} with negative correlation, it can be observed that the FCGAM model achieves the highest accuracy in all scenarios (Figure~\ref{fig:sim1:ll}). The superiority is even more evident when the sample size and the value of the correlation coefficient are increased. The extended GB2 and simple GB2 methods yield similar performances as the Gaussian models with log-transformed outcome (LN and LN.LSS), whereas the Gamma regression models (GA and GA.LSS) result in the lowest accuracy. For both LN and GA the GAMLSS models are not superior to their simple counterparts.

In \textit{Simulation Study~2} with positive correlation, the results change considerably (Supplementary Figure~S4).  As expected, the performance of the FCGAM and eGB2 models is largely the same, as the eGB2 model also assumes gamma distributed components with positive correlation. The simple GB2 model (assuming uncorrelated components) and the Gaussian models with log-transformed outcomes (LN and LN.LSS) perform comparably well in the scenarios with $\theta=1$, but deteriorated with increasing correlation ($\theta=5$ and $\theta=10$). Again, the Gamma regression models (GA and GA.LSS) exhibit the worst performance by far.\\

\begin{table}[!t]
\caption{Analysis of misspecified models for the association parameter in Simulation Study~3. The table presents the mean RMSE of the estimated conditional median values (upper part) and the mean of the predictive log-likelihood values (lower part), as obtained from fitting the FCGAM model to 1000 independent data sets and evaluating the fits on 1000 independently generated test data sets. Standard deviations (across the 1000 data sets) are given in brackets. The left part of the table refers to the scenarios with covariate-dependent correlation coefficients (in the observed range $\tau_i \in [-0.483,\hdots,0.464]$), whereas the right part refers to the scenarios with fixed negative correlation $\tau=-0.11$. The terms ``modeled $\theta$'' and ``constant $\theta$'' refer to the FCGAM models with a covariate-dependent predictor function for~$\theta$ (as in \eqref{theta}) and an intercept-only predictor function for~$\theta$ ($\beta_{\theta1}= \ldots = \beta_{\theta p}=0$ in \eqref{theta}), respectively.}
\begin{center}
\begin{footnotesize}
\begin{tabular}{llcccc}
\toprule
\multicolumn{2}{l}{\bf RMSE\ \ }&\multicolumn{2}{c}{Covariate-dependent correlation}&\multicolumn{2}{c}{Fixed correlation}\\[.1cm]
&&modeled $\theta$&constant $\theta$&modeled $\theta$&constant $\theta$\\
\midrule
$n$ = 200&&0.069 (0.031)&0.082 (0.042)&0.079 (0.032)&0.077 (0.034)\\
$n$ = 500&&0.042 (0.018)&0.058 (0.026)&0.050 (0.020)&0.047 (0.018)\\
$n$ = 1000&&0.031 (0.013)&0.049 (0.021)&0.036 (0.015)& 0.034 (0.014)\\
\midrule \\[-.3cm]
\midrule
\multicolumn{2}{l}{\bf Predictive }&\multicolumn{2}{c}{Covariate-dependent correlation}&\multicolumn{2}{c}{Fixed correlation}\\[.1cm]
\multicolumn{2}{l}{\bf log-likelihood}& modeled $\theta$&constant $\theta$&modeled $\theta$&constant $\theta$\\
\midrule
$n$ = 200&&-6.769 (19.176)&-7.116 (19.133)&-20.165 (19.163)&-19.356 (19.056)\\
$n$ = 500&&-10.736 (29.031)&-12.853 (29.087)&-44.585 (29.092)&-43.979 (29.050)\\
$n$ = 1000&&-18.237 (41.177)&-23.330 (40.965)&-85.656 (40.989)&-85.071 (40.967)\\
\bottomrule
\end{tabular}
\end{footnotesize}
\end{center}
\label{tab:sim3}
\end{table}

\noindent {\it Misspecified models for the association parameter in Simulation Study 3.} The RMSE of the estimated conditional median values and the predictive log-likelihood values of the FCGAM fits are summarized in Table~\ref{tab:sim3}. It is seen that ignoring the dependence of $\theta$ on the covariates (left part of Table~\ref{tab:sim3}) decreases both the predictive ability and the model fit.
In the scenario with $n=1000$ (large sample size), the difference in predictive log-likelihood values of 5.093 suggests ``considerably less'' empirical support for the model with constant $\theta$ (according to the rules of thumb provided in \citealp{burnham}).  On the other hand, when unnecessarily modeling the dependence of $\theta$ on $X_1,\hdots,X_4$ (right part of Table~\ref{tab:sim3}) the predictive ability and the model fit are mostly unaffected (showing only negligible differences in the RMSE and the predictive log-likelihood values). \\

\noindent {\it Overall summary.} Taken together, we make the following key empirical observations:
\begin{enumerate}
\item Point estimates from the FCGAM model are reliable and nearly unbiased even for small sample sizes.
\item The FCGAM model outperforms the eGB2 model in case of negative correlation and is en par with the eGB2 model when the correlation is positive.
\item In all scenarios, the Gamma regression models perform worst. In particular, they perform worse than the Gaussian models with log-transformed outcome.
\item Falsely modeling the association parameter does not deteriorate predictive performance to a large degree, whereas the FCGAM model with covariate-dependent $\theta$ improves the fit when the true association depends on the covariates.
\end{enumerate}

\section{Cohort Study of the German Dementia Competence Network}
\label{sec:app}

\noindent {\it Background.} The multi-center cohort study conducted by the German Dementia Competence Network \citep[DCN;][]{kornhuber} enrolled patients aged  older than 50 years that were diagnosed with either mild cognitive impairment (MCI), Alzheimer's disease (AD) or other dementia. Recruitment took place between 2003 and 2007. The main objective of the original study was to establish biomarkers for the diagnosis and prognosis of AD using clinical, laboratory and imaging measurements. Here, we investigate covariates that are potentially associated with amyloid-$\beta$ 42, amyloid-$\beta$ 40 and total tau protein concentrations measured in cerebrospinal fluid samples. These analyses are of high relevance for clinical routine in the neurosciences, since biomarkers enable the detection of AD pathology long before the occurrence of the first clinically obvious symptoms \citep{sperling2013}. Thus, relating covariates to biomarker values provides insight into disease pathology and prevention at the individual patient level. In the neurosciences, amyloid-$\beta$ 42, amyloid-$\beta$ 40 and total tau protein concentrations are usually not analyzed separately but in terms of their ratios. More specifically, the amyloid-$\beta$ 42/40 ratio and amyloid-$\beta$ 42/total tau ratio are considered to be strong predictors of AD progression \citep{koyama}. Therefore we focus on the group of MCI patients and relate their ratios to patient-related risk factors for dementia. \\

\begin{figure}[!t]
\begin{center}
\subfloat[][]{\includegraphics[width=0.45\textwidth]{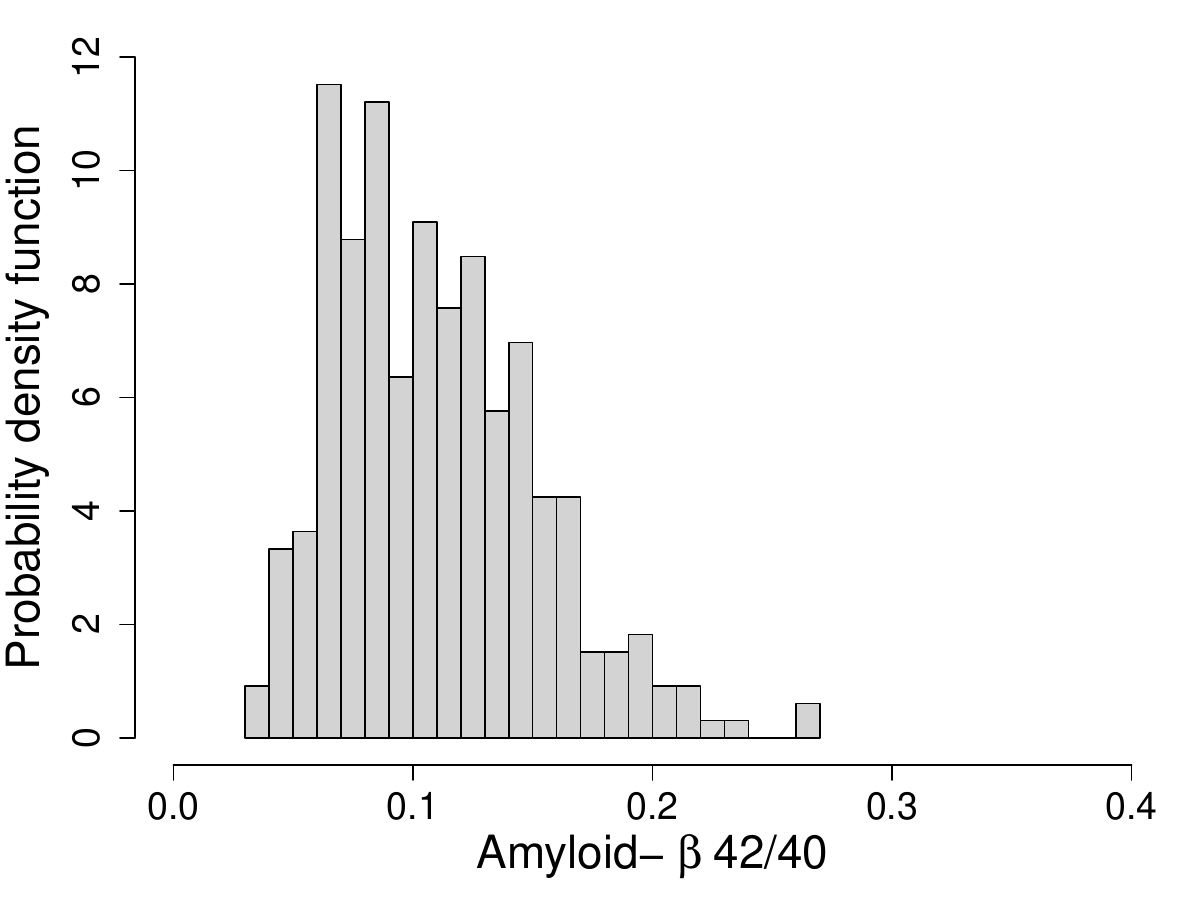}}
\subfloat[][]{\includegraphics[width=0.45\textwidth]{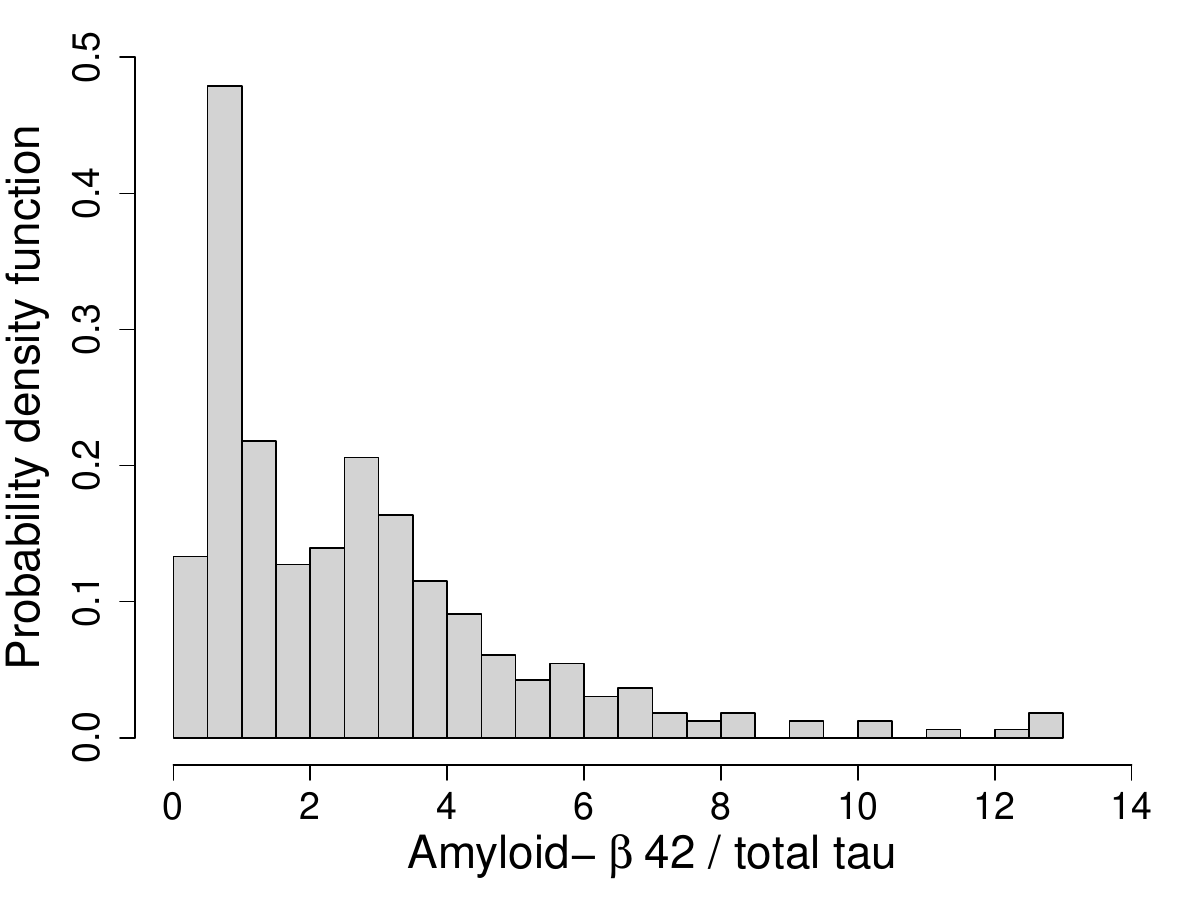}} \vspace{.2cm}
\caption{Analysis of the DCN study data. Distribution of the amyloid-$\beta$ 42/40 ratios (a) and the amyloid-$\beta$ 42/total tau ratios (b) in patients with MCI ($n=330$).}
\label{fig:hist_outcomes12}
\end{center}
\end{figure} 

\noindent {\it Description of the data.} In the DCN study, amyloid-$\beta$ and total tau baseline concentrations were measured in 374 patients diagnosed with MCI. In all other MCI patients, CSF biosamples were not collected due to either logistic reasons or lack of consent to the invasive procedure of lumbar puncture. Exclusion of patients that did not meet the inclusion criterion (age $\leq$ 50 years; $7$ patients) and of patients with missing values in at least one of the considered risk factors ($37$ patients) resulted in an analysis data set of $n=330$ patients. For details on the handling of missing values we refer to \citet{berger2019}. The unconditional distributions of the amyloid-$\beta$~42/40 ratio and the amyloid-$\beta$ 42/total tau ratio are visualized in Figure \ref{fig:hist_outcomes12}. While the values of the amyloid-$\beta$ 42/40 ratios are all smaller than 0.3, the amyloid-$\beta$ 42/total tau ratios range between 0.2 and 13, exhibiting a heavily right-skewed distribution. Kendall's rank correlation coefficient between the two components is measured to be \mbox{$\tau=0.307$} (amyloid-$\beta$ 42/40) and \mbox{$\tau=-0.269$} (amyloid-$\beta$ 42/total tau). Thus, our analysis had to deal with both positive and negative correlations between the ratio components. As mentioned before, this problem was our main motivation for the development of the FCGAM model.

The risk factors included in the analysis are summarized in Table \ref{tab:app_abeta_summary}. These were: (i) sex, (ii) age in years, (iii) educational level (measured by the number of years of education), and (iv) a binary variable indicating whether a patient was a carrier of the apolipoprotein E$\epsilon$4 (ApoE $\epsilon$4) allele, which is a strong genetic predictor of AD.\\

\begin{table}[!t]
\caption{Description and summary statistics of the two ratio outcomes and the covariates used for the analysis of the DCN study data (Q1 = first quartile, Q3 = third quartile). All numbers refer to a subset of patients diagnosed with MCI ($n = 330$). For details on the collection of the data, see \citet{kornhuber}.}
\begin{center}\footnotesize
\begin{tabular}{lrrrrrrr}
\toprule
Variable&\multicolumn{6}{c}{Summary statistics}\\
\midrule
&min & Q1 &median  & Q3&max& mean& sd\\[0.15cm]
Amyloid-$\beta$ 42/40 &0.03&0.08&0.10&0.14&0.26&0.11&0.04\\
Amyloid-$\beta$ 42/total tau &0.19&0.91&2.13&3.72&12.95&2.70&2.34\\[0.15cm]
Age (years) &51&60&66&73&89&66.51&8.11\\
Education (years) &2&11&11&13&19&12.18&2.96\\[0.15cm]
Sex &\multicolumn{1}{l}{\ male:}&\multicolumn{2}{r}{194 (58.8\%)}&&\multicolumn{1}{l}{\hspace{-0.3cm}female:}&\multicolumn{2}{r}{136 (41.2\%)}\\
ApoE $\epsilon$4 &\multicolumn{1}{l}{\ no:}&\multicolumn{2}{r}{182 (55.2\%)}&&\multicolumn{1}{l}{\hspace{-0.3cm}yes:}&\multicolumn{2}{r}{148 (44.8\%)}\\
\bottomrule
\end{tabular}
\end{center}
\label{tab:app_abeta_summary}
\end{table}

\noindent {\it Model fitting I.} In a preliminary analysis, we fitted GA and GA.LSS models for the components amyloid-$\beta$~42, amyloid-$\beta$ 40 and total tau, where either the rate parameters only (GA) or both the rate and the shape parameters (GA.LSS) were related to the four covariates. According to the Bayesian information criterion (BIC) the simple GA models (BIC $=4899.988$ for amyloid-$\beta$ 42, BIC $=6248.201$ for amyloid-$\beta$ 40 and\linebreak BIC $=4574.974$ for total tau) showed better fits than the respective GA.LSS models (BIC $=4914.396$ for amyloid-$\beta$ 42, BIC $=6266.375$ for amyloid-$\beta$ 40 and BIC $=4581.902$ for total tau). This result indicates that it is sufficient to relate the two rate parameters to the covariates. Furthermore, it supports the assumptions of the proposed FCGAM model, which treats the shape parameters $\delta_U$ and~$\delta_V$ as nuisance parameters. 

The fits of the FCGAM model with covariate-dependent association parameter are presented in Supplementary Table~S1. According to the credible intervals given in columns 4 and 6, none of the risk factors is found to affect the association parameter $\theta$. Applying Equation~\eqref{eq:tau} yielded the mean estimated rank correlations $\hat{\tau}(\hat{\theta}) = 0.35$ (range: 0.21 to 0.49) for amyloid-$\beta$ 42/40 and $\hat{\tau}(\hat{\theta}) = -0.23$ (range: -0.41 to 0.03) for amyloid-$\beta$ 42/total tau. Both estimates are  close to the respective unconditional rank correlations. \\

\noindent {\it Model fitting II.} Based on the above findings and to further reduce model complexity, we fitted FCGAM models with constant association parameter $\theta$ (setting the coefficients $\beta_{\theta,\text{Age}}, \ldots , \beta_{\theta,\text{ApoE $\epsilon4$}}$ to zero). We then calculated the BIC from these reduced models along with their counterparts obtained from the models with covariate-dependent $\theta$. For amyloid-$\beta$ 42/40, the BIC values were $11063.74$ (constant $\theta$) and~$11084.75$ (modeled $\theta$). For amyloid-$\beta$ 42/total tau, the BIC values were $9249.99$ (constant $\theta$) and $9267.485$ (modeled $\theta$). This result suggests that the reduced models with constant $\theta$ meet a better compromise between model fit and model complexity than the respective full models with covariate-dependent $\theta$. This observation is confirmed by randomized quantile residuals of the reduced FCGAM models (Figure~\ref{fig:qq_DCN}) indicating only slight deviations from normality. \\

\begin{figure}[!t]
\centering
\subfloat[][]{
\includegraphics[trim=0.5cm 0cm 0cm 0.5cm, width=0.35\textwidth]{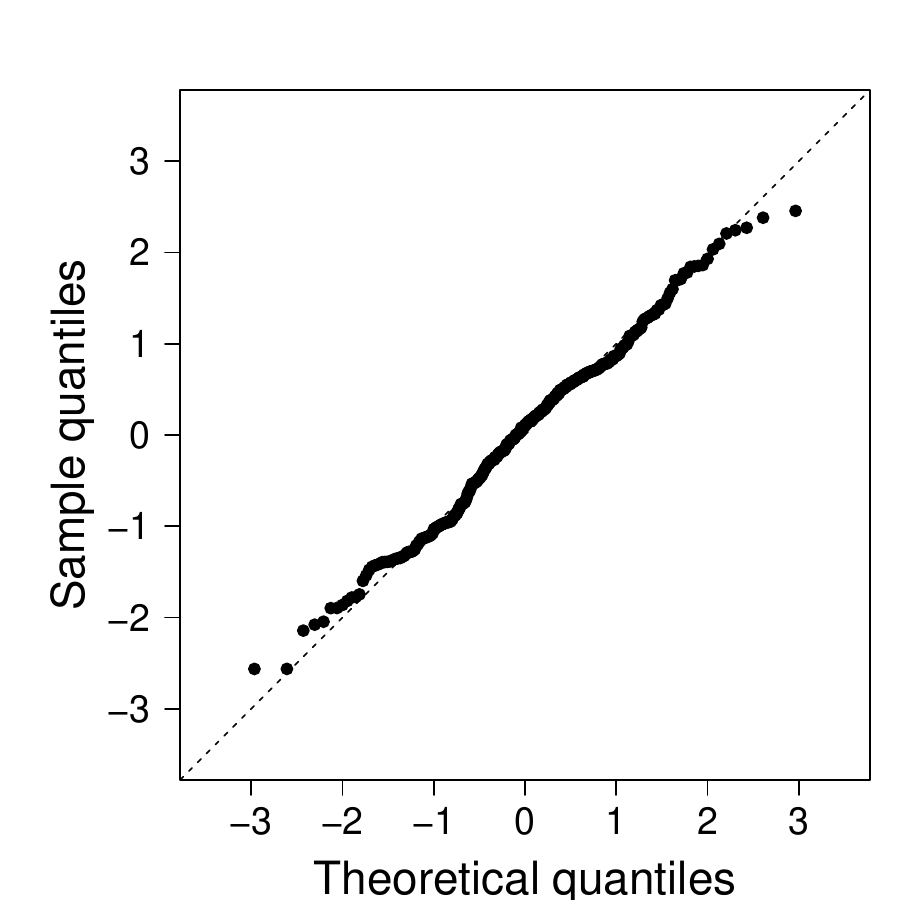}}
\subfloat[][]{
\includegraphics[trim=0.5cm 0cm 0cm 0.5cm, width=0.35\textwidth]{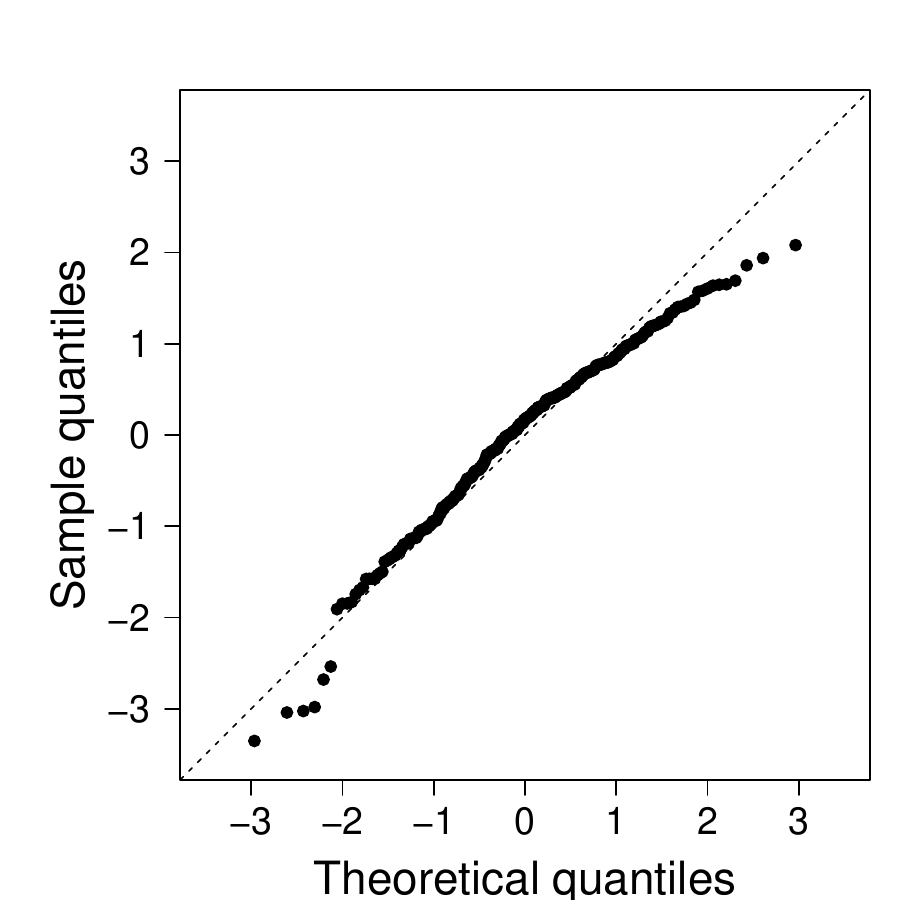}}  \vspace{.3cm}

\caption{Analysis of the amyloid-$\beta$ 42/40 ratios (a) and the amyloid-$\beta$ 42/total tau ratios (b) in the DCN study data. The figure presents normal quantile-quantile plots of the quantile residuals obtained from the FCGAM model fits.}
\label{fig:qq_DCN}
\end{figure}

\noindent {\it Main results.} The results obtained from the reduced FCGAM models are shown in Table~\ref{tab:res_DCN2} and Supplementary Figures~S5 and~S6. The upper part of Table~\ref{tab:res_DCN2} refers to the parameter $\Lambda=\lambda_U/\lambda_V$, reporting the differences $\hat{\beta}_{\Lambda j} := \hat{\beta}_{Uj}-\hat{\beta}_{Vj}$. Note that the coefficient estimates are very similar to the respective estimates of the more complex model in Table~S1. For example, for amyloid-$\beta$ 42/total tau one obtains $\hat{\beta}_{\Lambda , \text{ApoE $\epsilon 4$}}=0.3786$ (Table~\ref{tab:res_DCN2}) and $\hat{\beta}_{\Lambda , \text{ApoE $\epsilon 4$}}=0.2411 + 0.1406 = 0.3817$ (Table~S1). The credible intervals in Table~\ref{tab:res_DCN2} were obtained by drawing a sample of size 10,000 from the posterior distribution in \eqref{eq:posterior} and by calculating the $2.5\%$ and $97.5\%$ percentiles from the sampled differences ${\beta}_{Uj}-{\beta}_{Vj}$. According to the results of the FCGAM model, there is strong evidence for an effect of the risk factors age and ApoE $\epsilon$4 on the\linebreak amyloid-$\beta$ 42/40 and amyloid-$\beta$ 42/total tau ratios. As depicted in Supplementary Figures~S5(a) and~S6(a), both the expected amyloid-$\beta$ 42/40 ratio and the expected amyloid-$\beta$ 42/total tau ratio decrease with increasing age, implying a higher risk of progression to AD in older patients. Similarly, the expected ratios of ApoE~$\epsilon$4 carriers are strongly reduced compared to patients not carrying the allele (Supplementary Figures~S5(d) and~S6(d), confirming the important role of this genetic risk factor in AD progression). The figures also illustrate how the estimated median values as well as the $10\%$ and $90\%$ percentiles of the distributions change with the covariates. In contrast to age and ApoE~$\epsilon$4, Table~\ref{tab:res_DCN2} shows no evidence for an effect of sex and educational level on the two ratio outcomes. These results are in full agreement with the findings by~\citet{berger2019}, who fitted an eGB2 model with amyloid-$\beta$ 42/40 outcome to the DCN study data.

\begin{table}[!t]
\caption{Analysis of the amyloid-$\beta$ 42/40 ratios (left) and the amyloid-$\beta$ 42/total tau ratios (right) in the DCN study data. The table presents the coefficient estimates with $95\%$ credible intervals (calculated by the procedure described in Section~\ref{sec:fe_ci}), as obtained from fitting FCGAM models with constant association parameter~$\theta$.}
\begin{center}\footnotesize
\begin{tabular}{llrrrr}
\toprule
&&\multicolumn{2}{c}{\textbf{amyloid-$\boldsymbol{\beta}$ 42/40}}&\multicolumn{2}{c}{\textbf{amyloid-$\boldsymbol{\beta}$ 42/total tau}}\\[.1cm]
Parameter&Covariate&$\hat{\beta}$& $95\%$ CI&$\hat{\beta}$& $95\%$ CI\\[.05cm]
\midrule
$\Lambda$&Age (years)      & $0.0089$    & $[0.0041;0.0137]$ & $0.0256$    & $[0.0156;0.0377]$ \\
&Education (years) & $-0.0017$    & $[-0.0152;0.0117]$ & $-0.0150$    & $[-0.0461;0.0168]$\\
&Sex (male) & . & . & . & .\\
&Sex (female)    & $0.0724$    & $[-0.0073;0.1547]$ & $0.0283$    & $[-0.1544;0.2146]$\\
&ApoE $\epsilon$4 (no) & . & . & . & .\\
&ApoE $\epsilon$4 (yes) & $0.1967$  & $[0.1157;0.2766]$ & $0.3786$  & $[0.1965;0.5584]$\\[.05cm]
\midrule \\[-.3cm]
$\theta$&& $3.5325$ & $[2.7671;4.2888]$ & $-2.0611$ & $[-2.8064;-1.3073]$\\
$\delta_U$&& $6.0586$ & $[5.1169;7.0218]$ & $5.8090$ & $[4.9062;6.7092]$\\
$\delta_V$&& $10.0151$ & $[8.5182;11.5039]$ & $2.6718$ & $[2.2900;3.0542]$\\
\bottomrule
\end{tabular}
\end{center}
\label{tab:res_DCN2}
\end{table}

\section{Discussion}\label{sec:discussion}

The main contribution of this work is a copula-based regression model that relates the ratio of two gamma distributed components to a set of covariates. Our model is primarily designed for the analysis of ratio outcomes in medical research, which is an important task, for instance, in neurology \citep{novellino}, infectiology \citep{caby2016} and pharmacology \citep{cawley}. Importantly, when biomarker ratios are used as clinical metrics or indicators of clinical outcomes, our model may be used to relate the respective ratio values to a set of risk factors and/or confounding variables.  A prototypical example is given by the prognosis of AD progression considering ratios of amyloid-$\beta$ and total tau protein biomarkers, as presented in Section~\ref{sec:app} of this paper.

Conceptually, the FCGAM model developed in this paper has the following advantages: First, by assuming the ratio components to follow marginal gamma distributions, the FCGAM model represents the two biomarkers by real-valued random variables with positive support and right-skewed (marginal) distributions. These distributional characteristics, which are common to most biomarkers encountered in medical research, are directly incorporated in the definition of the proposed copula model. As a consequence, the resulting ratio density incorporates the full information contained in the marginal densities of the components of the ratio. We emphasize that this property does not apply to simpler modeling approaches approximating the ratio by a single log-normal or gamma-distributed variable. In fact, without consideration of the paired components themselves, these approximations inevitably bear the risk of a loss of information \citep[``neglected companion'';][]{kerkhof}. This issue has been demonstrated by the results of our simulation study (Section \ref{sec:sim}), which
resulted in an increased estimation accuracy of the proposed copula-based approach in all data-generating scenarios. We also stress that linking the two marginal distributions by a copula does in general not restrict our model to the use of two gamma distributions for the ratio components. In fact, although our model can be seen as the most relevant use case in many medical applications, the marginal distributions can in principle be replaced by arbitrary parametric distributions. For instance, our model can in a straightforward manner be extended to situations where one biomarker is discrete or ordinal. 

Second, the proposed FCGAM model has a high flexibility regarding the direction of the association between the two ratio components. Importantly, by the choice of Frank copula, the FCGAM model allows for both positive and negative values of the (rank) correlation between the components $U$ and $V$, thereby improving previous modeling approaches that restricted this correlation to be zero \citep{YeeVGAMBook} or positive \citep{berger2019}. As demonstrated in the simulation study in Section \ref{sec:sim}, the FCGAM model indeed performs better in terms of estimation accuracy when the association between $U$ and $V$ is negative. On the other hand, it does not perform worse than the aforementioned approach when the association between $U$ and $V$ is positive.

Third, although the proposed model incorporates the full information contained in the marginal densities $f_U$ and $f_V$, it provides a rather simple interpretation of the associations between the ratio $U/V$ and the covariates. This is because the FCGAM model reduces the original five-parameter set $(\lambda_U, \delta_U, \lambda_V, \delta_V, \theta)^\top$ (including all parameters of the marginal densities and the association parameter $\theta$) to the restricted set $(\Lambda , \delta_U, \delta_V, \theta)^\top$ with $\Lambda = \lambda_U / \lambda_V$. As a consequence, when treating $\delta_U$, $\delta_V$ (and possibly also $\theta$) as nuisance parameters, the associations between $U/V$ and each of the covariates can be investigated using one-dimensional coefficient estimates and single-parameter hypothesis tests. Similarly, the association between the components $U$ and $V$ has a natural interpretation in terms of Kendall's rank correlation, being related to $\theta$ by the one-to-one relationship given in Equation \eqref{eq:tau}.

Finally, beside the flexibility in specifying other marginal distributions than the gamma distribution, the FCGAM model may be extended in many other ways. For example,  Frank copula could be replaced by other copulas \citep[noting that the results on ratio densities are also valid for other absolutely continuous copulas; see][]{ly2019}. When there is particular interest in the tail dependencies of $U$ and $V$, benchmark experiments to identify the best fitting copula and/or marginal distributions could be performed using resampling techniques (e.g.\@ bootstrapping or subsampling). It should be noted, however, that other copulas from the literature might be less flexible regarding the range of $\theta$ \citep[and thus also the range of possible associations between the components $U$ and $V$; see e.g.][for a recent overview of copulas allowing for modeling negative dependence]{GhoBhuFin2022}. For example, it is not possible to model negative associations between $U$ and $V$ using non-rotated Gumbel or Joe copulas.

Despite our biostatistical focus, the proposed FCGAM methodology is a general statistical modeling approach that can in principle be used in any research discipline. Interesting areas are e.g.\@ environmental research \citep{perri} and information engineering \citep{mekic2012}.

\section{Appendices}

It contains the proof of Proposition 2 (Appendix A), additional simulation results (Appendix B) as well as further results of the analysis of the DCN study data (Appendix C). 

\section{Acknowledgments}
Moritz Berger acknowledges support by the grant BE 7543/1-1 of the German research foundation (DFG). Nadja Klein acknowledges support by the Emmy Noether grant KL 3037/1-1 of the DFG.  The analysis of the DCN study data was supported by the German Federal Ministry of Education and Research (Kompetenznetz Demenzen, grant 01GI0420)

\bibliographystyle{plainnat}
\bibliography{bib}

\appendix

\renewcommand{\thesection}{Appendix~\Alph{section}}
\renewcommand\thefigure{\Alph{figure}}   
\renewcommand\thetable{\Alph{table}} 
\setcounter{figure}{0}
\setcounter{table}{0}
\section{Proof of Proposition 2}\label{app:A}

By Proposition 1, the PDF of the ratio $R=U/V$ is given by 
\begin{align*}
f_R(r;\lambda_U,\lambda_V,\delta_U,\delta_V,\theta)=\int_{0}^{1}&\,c_\theta\left(F_U(r\,F_V^{-1}(s; \lambda_V, \delta_V); \lambda_U, \delta_U),s\right)\\ 
& \times \, \left|F_V^{-1}(s; \lambda_V, \delta_V)\right| \, f_U(rF_V^{-1}(s; \lambda_V, \delta_V), \lambda_U, \delta_U)\,ds\\
=\int_{0}^{1}&c_\theta\left(\frac{1}{\Gamma(\delta_U)} \, \gamma\left(\delta_U,\lambda_U\,r\,F_V^{-1}(s; \lambda_V, \delta_V)\right),s\right)\\
& \times \, F_V^{-1}(s; \lambda_V, \delta_V) \ \frac{\lambda_U^{\delta_U}}{\Gamma(\delta_U)}\left(r\,F_V^{-1}(s; \lambda_V, \delta_V)\right)^{\delta_U-1}\\
& \times \, \exp\left(-\lambda_U\,r\,F_V^{-1}(s; \lambda_V, \delta_V)\right)\,ds\\
=\int_{0}^{1}&c_\theta\left(\frac{1}{\Gamma(\delta_U)} \, \gamma\left(\delta_U,\lambda_U\,r\ \frac{\gamma^{-1}\left(\delta_V,\Gamma(\delta_V)\,s\right)}{\lambda_V}\right),s\right)\\
& \times \, \frac{\lambda_U^{\delta_U}}{\Gamma(\delta_U)}\left(\frac{\gamma^{-1}\left(\delta_V,\Gamma(\delta_V)\,s\right)}{\lambda_V}\right)^{\delta_U}r^{\delta_U-1}\\
& \times \, \exp\left(-\lambda_U\,r\ \frac{\gamma^{-1}\left(\delta_V,\Gamma(\delta_V)\,s\right)}{\lambda_V}\right)\,ds\\
\overset{\Lambda = \lambda_U / \lambda_V}{=}\int_{0}^{1}&{c_\theta\left(\frac{1}{\Gamma(\delta_U)}\,\gamma\left(\delta_U, r\,\Lambda\, \gamma^{-1}\left(\delta_V, \Gamma(\delta_V)s\right)\right),s\right)} \nonumber \\
& \times \, \frac{\Lambda^{\delta_U}\, r^{(\delta_U-1)}}{\Gamma(\delta_U)}\, \left(\gamma^{-1}\left(\delta_V, \Gamma(\delta_V)s\right)\right)^{\delta_U} \nonumber \\
& \times \, \exp\left(-r\, \Lambda\, \gamma^{-1}\left(\delta_V, \Gamma(\delta_V)s\right)\right)\,ds\\
= f_R & (r; \Lambda, \delta_U, \delta_V, \theta)\, ,
\end{align*}
where $\gamma(\cdot,\cdot)$ denotes the lower incomplete gamma function. \QEDA

\newpage

\section{Further Simulation Results}

\renewcommand{\thefigure}{S\arabic{figure}}
\renewcommand{\thetable}{S\arabic{table}}

\begin{figure}[H]
\begin{center}
\includegraphics[width=0.8\textwidth]{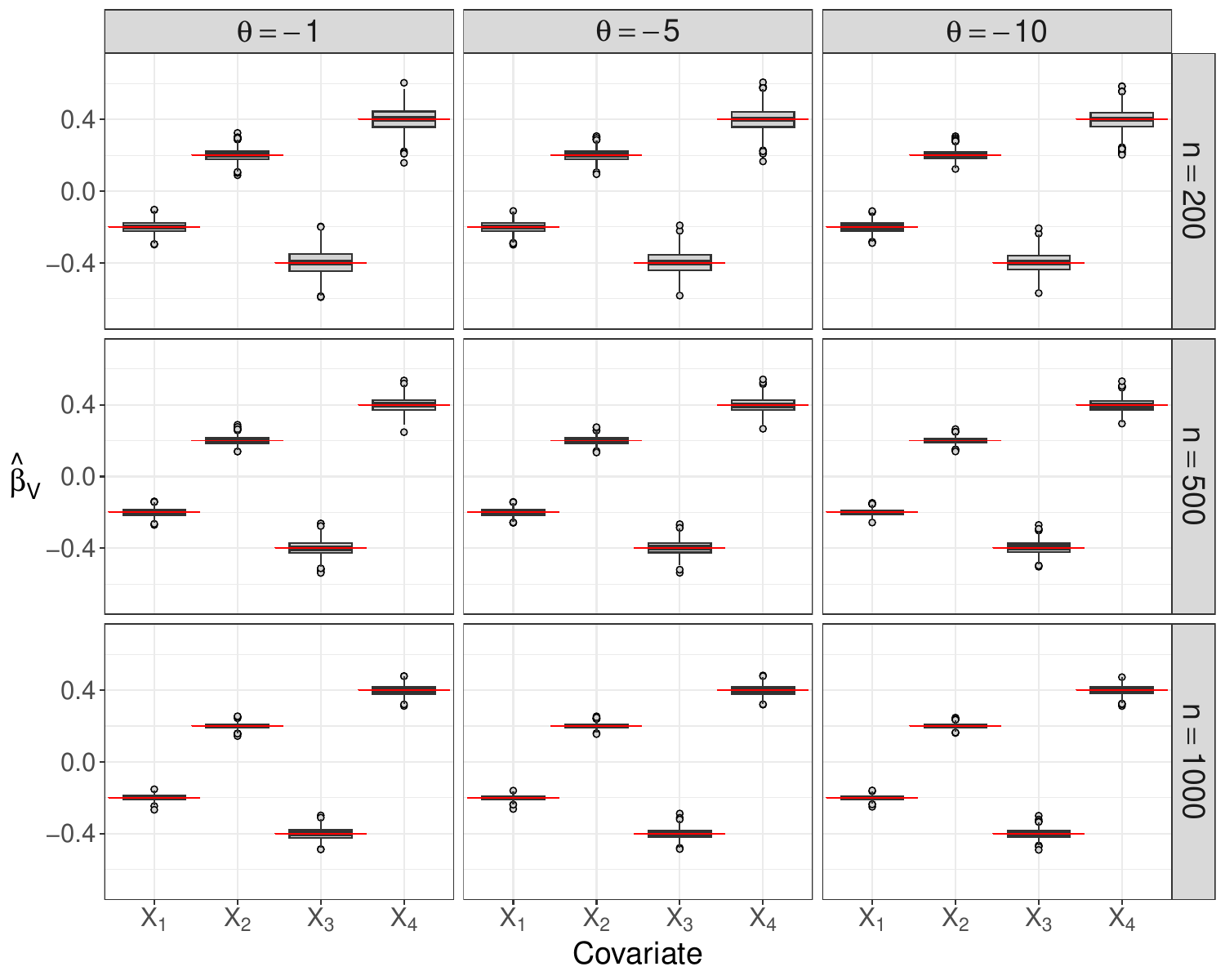}
\caption{Point estimates of the FCGAM coefficients in \textit{Simulation Study 1}. The boxplots visualize the MLEs of the coefficients $\beta_{V1}=-0.2$, $\beta_{V2}=0.2$, $\beta_{V3}=-0.4$ and $\beta_{V4}=0.4$ that were obtained from fitting the FCGAM model to 1000 data sets of size $n$ each. The red lines refer to the true values of the coefficients.}
\label{fig:sim1:esty}
\end{center}
\end{figure} 

\begin{figure}[H]
\begin{center}
\includegraphics[width=0.8\textwidth]{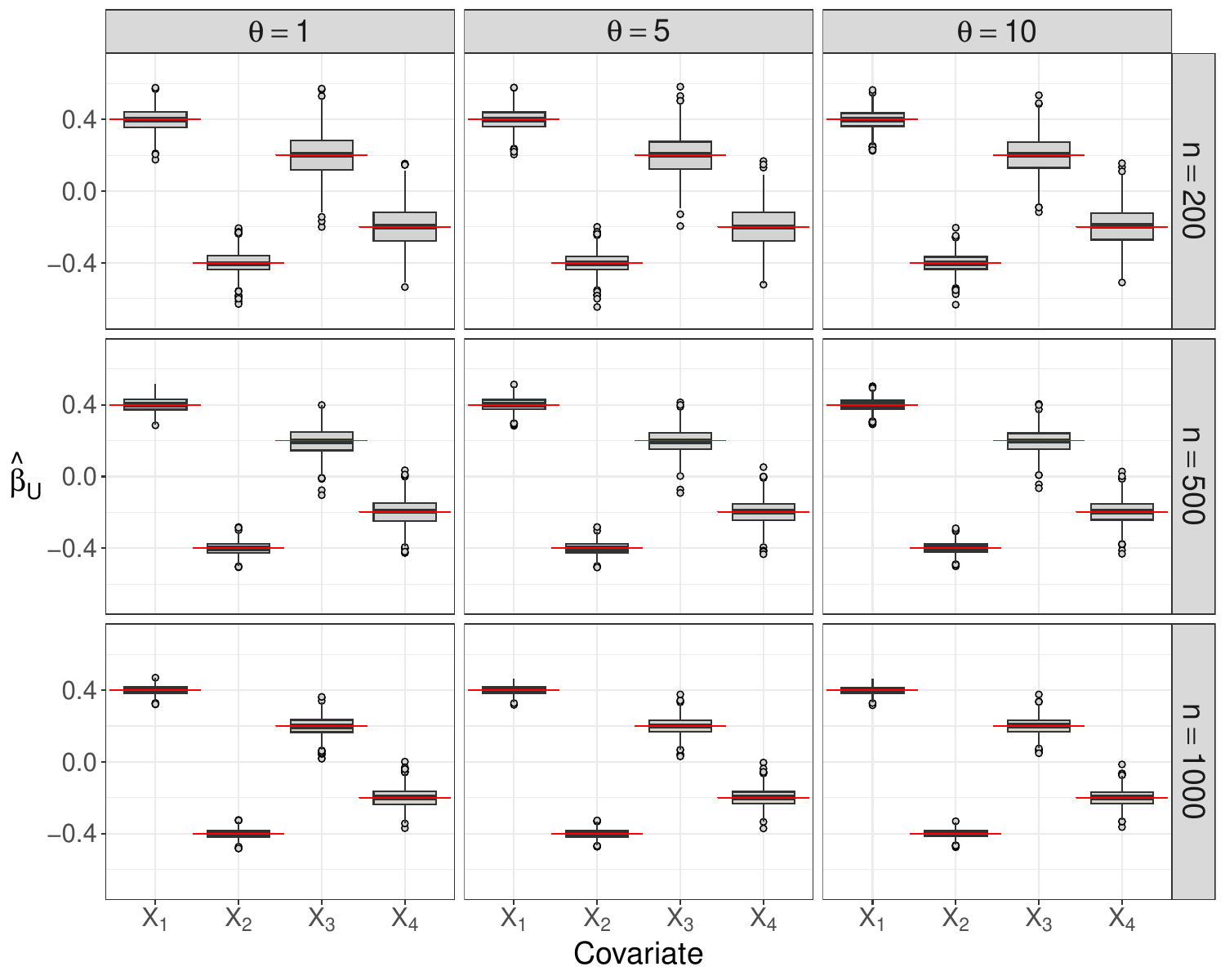}
\caption{Point estimates of the FCGAM coefficients in \textit{Simulation Study 2}. The boxplots visualize the MLEs of the coefficients $\beta_{U1}=0.4$, $\beta_{U2}=-0.4$, $\beta_{U3}=0.2$ and $\beta_{U4}=-0.2$ that were obtained from fitting the FCGAM model to 1000 data sets of size $n$ each. The red lines refer to the true values of the coefficients.}
\label{fig:sim2:estx}
\end{center}
\end{figure} 

\begin{figure}[H]
\begin{center}
\includegraphics[width=0.8\textwidth]{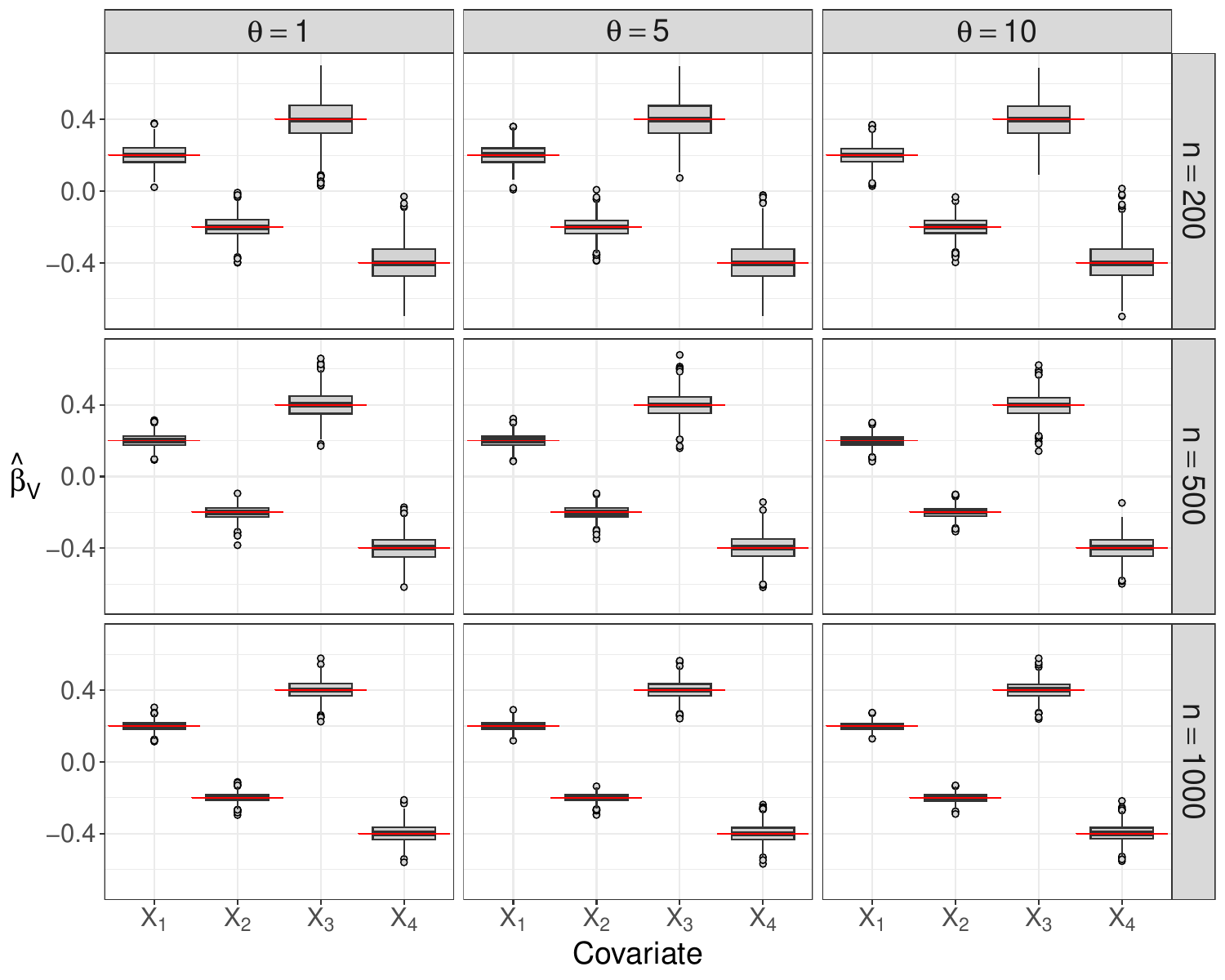}
\caption{Point estimates of the FCGAM coefficients in \textit{Simulation Study 2}. The boxplots visualize the MLEs of the coefficients $\beta_{V1}=0.2$, $\beta_{V2}=-0.2$, $\beta_{V3}=0.4$ and $\beta_{V4}=-0.4$ that were obtained from fitting the FCGAM model to 1000 data sets of size $n$ each. The red lines refer to the true values of the coefficients.
}
\label{fig:sim2:esty}
\end{center}
\end{figure}

\begin{figure}[H]
\begin{center}
\includegraphics[width=0.8\textwidth]{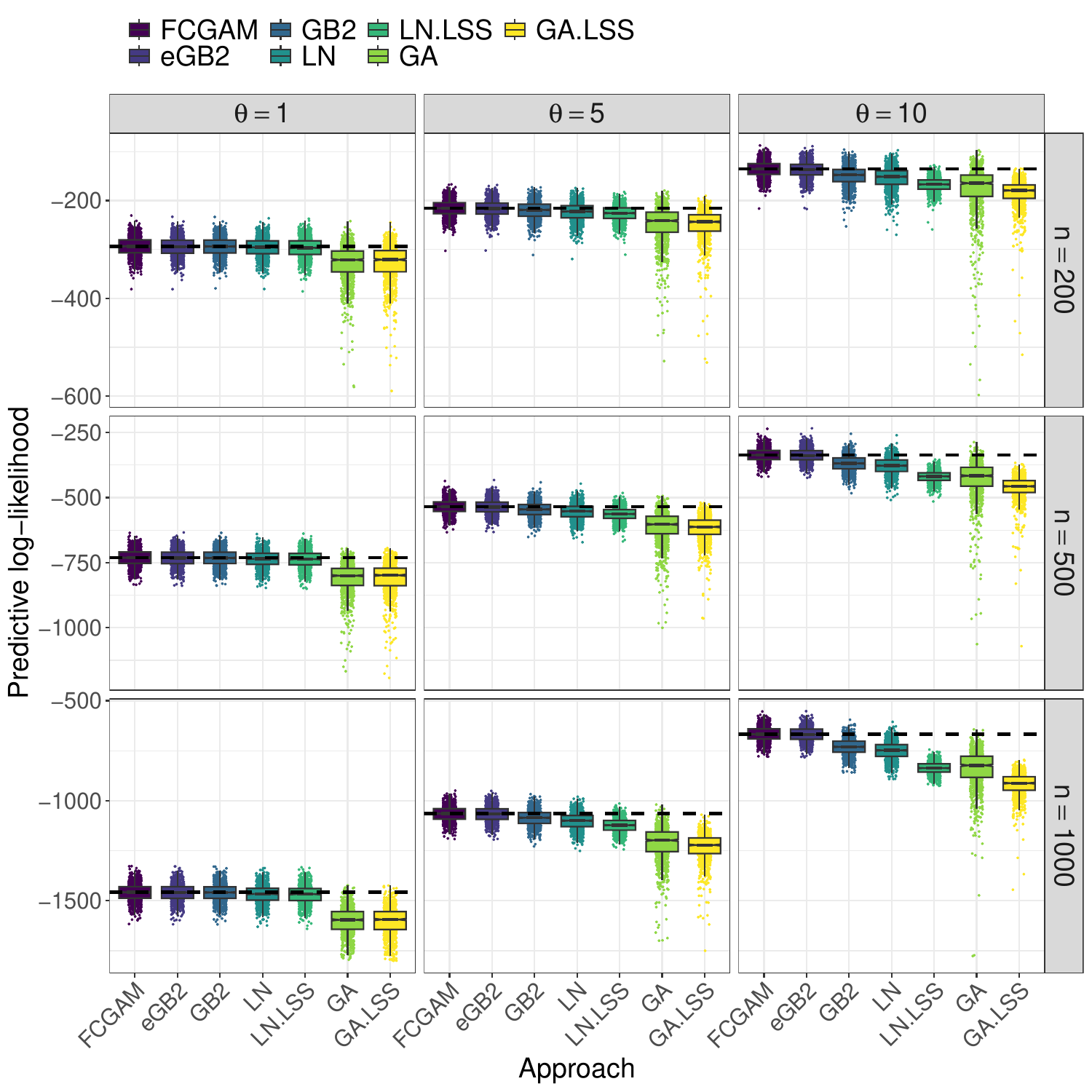}
\caption{Comparison of the FCGAM model to alternative methods in \textit{Simulation Study 2}. The boxplots visualize the predictive log-likelihood values obtained from the FCGAM model and from the benchmark methods (ii) to (vii). All models were fitted to 1000 independent data sets and evaluated on independently generated test data sets of the same size. In each panel, the dashed horizontal line indicates the median predictive log-likelihood value of the best performing method.}
\label{fig:sim2:ll}
\end{center}
\end{figure} 

\newpage

\section{Further Results of the Analysis of the DCN Study Data}

\begin{table}[H]
\caption{Analysis of the amyloid-$\beta$ 42/40 ratios (left) and the amyloid-$\beta$ 42/total tau ratios (right) in the DCN study data. The table presents the coefficient estimates with $95\%$ credible intervals (calculated by the procedure described in Section~2.3), as obtained from fitting FCGAM models with covariate-dependent association parameter~$\theta$.}
\begin{center}\footnotesize
\begin{tabular}{llrrrr}
\toprule
&&\multicolumn{2}{c}{\textbf{amyloid-$\boldsymbol{\beta}$ 42/40}}&\multicolumn{2}{c}{\textbf{amyloid-$\boldsymbol{\beta}$ 42/total tau}}\\[.1cm]
Parameter&Covariate&$\hat{\beta}$& $95\%$ CI&$\hat{\beta}$& $95\%$ CI\\[.1cm] 
\midrule 
$\lambda_U$&Age (years)       &$0.0073$    &$[0.0019;0.0127]$&$0.0092$    &$[0.0037;0.0148]$\\
&Education (years) & $0.0079$    &$[-0.0071;0.0224]$& $0.0035$    &$[-0.0123;0.0192]$\\
&Sex (male) & . & . & . & .\\
&Sex (female)    &$0.0017$    &$[-0.0905;0.0933]$&$-0.0247$    &$[-0.1178;0.0680]$\\
&ApoE $\epsilon$4 (no) & . & . & . & .\\
&ApoE $\epsilon$4 (yes) &$0.1774$  &$[0.0871;0.2681]$&$0.2411$  &$[0.1503;0.3325]$\\[.05cm]
\midrule \\[-.3cm]
$\lambda_V$&Age (years)      &$-0.0015$    &$[-0.0056;0.0025]$&$-0.0160$    &$[-0.0242;-0.0079]$\\
&Education (years) & $0.0100$    &$[-0.0013;0.0212]$& $0.0166$    &$[-0.0071;0.0406]$\\
&Sex (male) & . & . & . & .\\
&Sex (female)    &$-0.0751$    &$[-0.1460;-0.0061]$&$-0.0746$    &$[-0.2127;0.0663]$\\
&ApoE $\epsilon$4 (no) & . & . & . & .\\
&ApoE $\epsilon$4 (yes) &$-0.0141$  &$[-0.0811;0.0526]$&$-0.1406$  &$[-0.2745;-0.0065]$\\[.05cm]
\midrule \\[-.3cm]
$\theta$&Age (years)       &$0.0369$    &$[-0.0645;0.1347]$&$0.0637$    &$[-0.0356;0.1640]$\\
&Education (years) & $0.1247$    &$[-0.1453;0.3928]$& $0.0165$    &$[-0.2416;0.2750]$\\
&Sex (male) & . & . & . & .\\
&Sex (female)    &$-0.5679$    &$[-2.1432;1.0406]$&$-0.9324$    &$[-2.4944;0.6044]$\\
&ApoE $\epsilon$4 (no) & . & . & . & .\\
&ApoE $\epsilon$4 (yes) &$0.5610$  &$[-0.9635;2.0978]$& $-1.2897$  &$[-2.8033;0.1928]$\\[.05cm]
\midrule \\[-.3cm]
$\delta_U$&&$5.7822$&$[4.9088;5.6759]$&$5.6900$&$[4.8084;6.5605]$\\
$\delta_V$&&$10.0740$&$[8.4946;11.6237]$&$2.6023$&$[2.2232;2.9831]$\\
\bottomrule
\end{tabular}
\end{center}
\label{tab:res_DCN1}
\end{table}

\begin{figure}[p]
\begin{center}
\subfloat[][]{\includegraphics[width=0.45\textwidth]{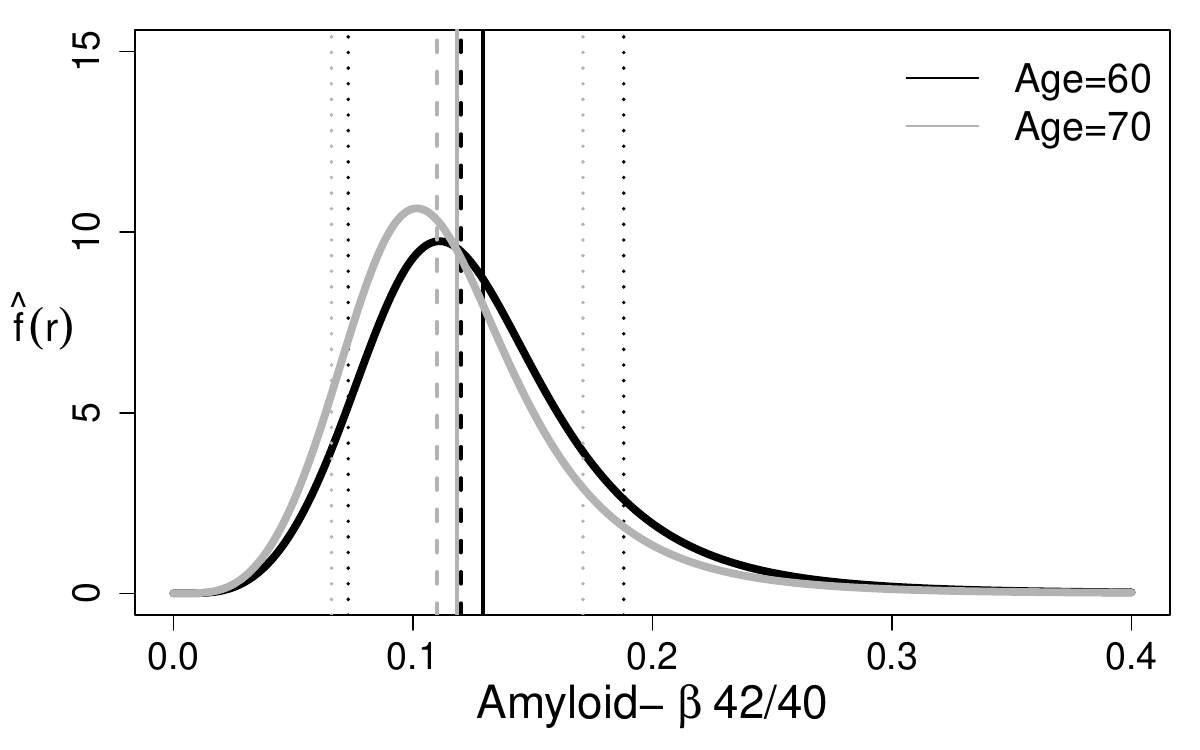}}
\hspace{0.05cm}
\subfloat[][]{\includegraphics[width=0.45\textwidth]{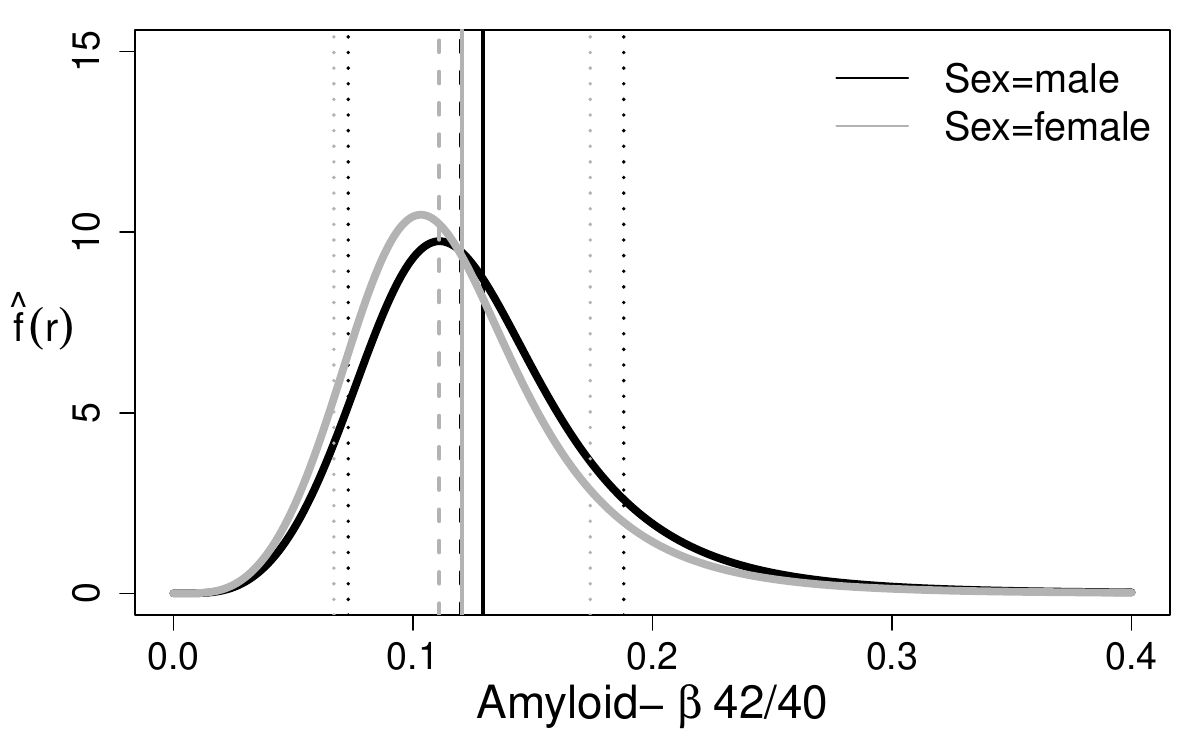}}

\subfloat[][]{\includegraphics[width=0.45\textwidth]{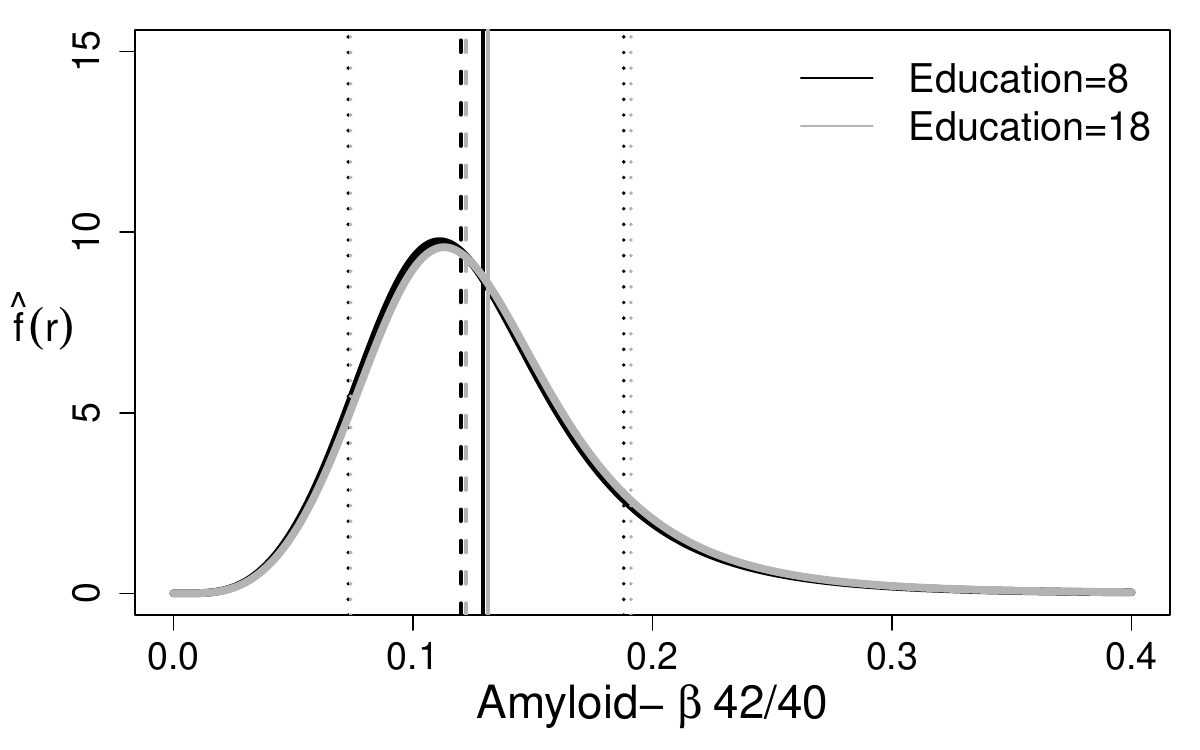}}
\hspace{0.05cm}
\subfloat[][]{\includegraphics[width=0.45\textwidth]{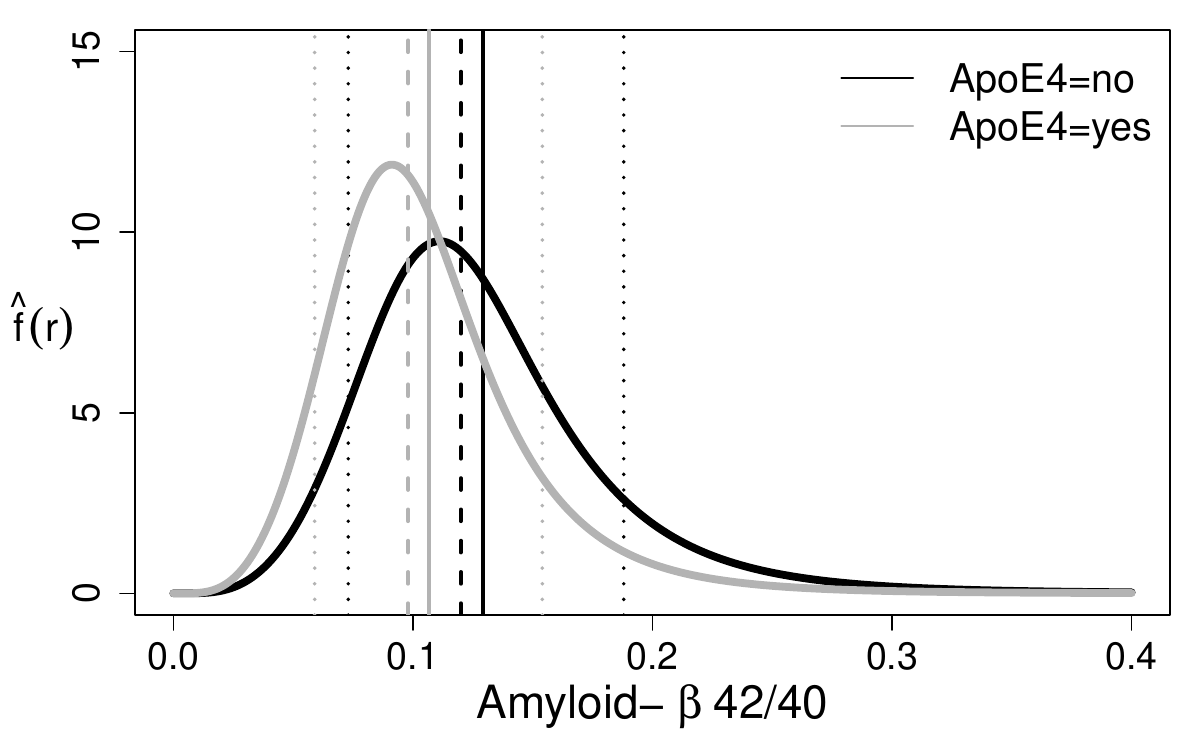}} \vspace{.3cm}
\caption{Analysis of the amyloid-$\beta$ 42/40 ratios in the DCN study data. The black lines refer to the estimated PDFs for a covariate profile of a randomly selected study participant (60 years of age, Sex = male, Education = 8 years, ApoE $\epsilon$4 = no). The gray lines refer to a situation where the participant would have been 70 years of age (a), would have been female (b), would have had 18 years of education (c), and would have been a carrier of the ApoE~$\epsilon$4 allele (d). The vertical lines correspond to the estimated mean values (solid), median values (dashed) and $10\%$ and $90\%$ percentiles (dotted).} 
\label{fig:res_R4240}
\end{center}
\end{figure} 

\begin{figure}[p]
\begin{center}
\subfloat[][]{\includegraphics[width=0.45\textwidth]{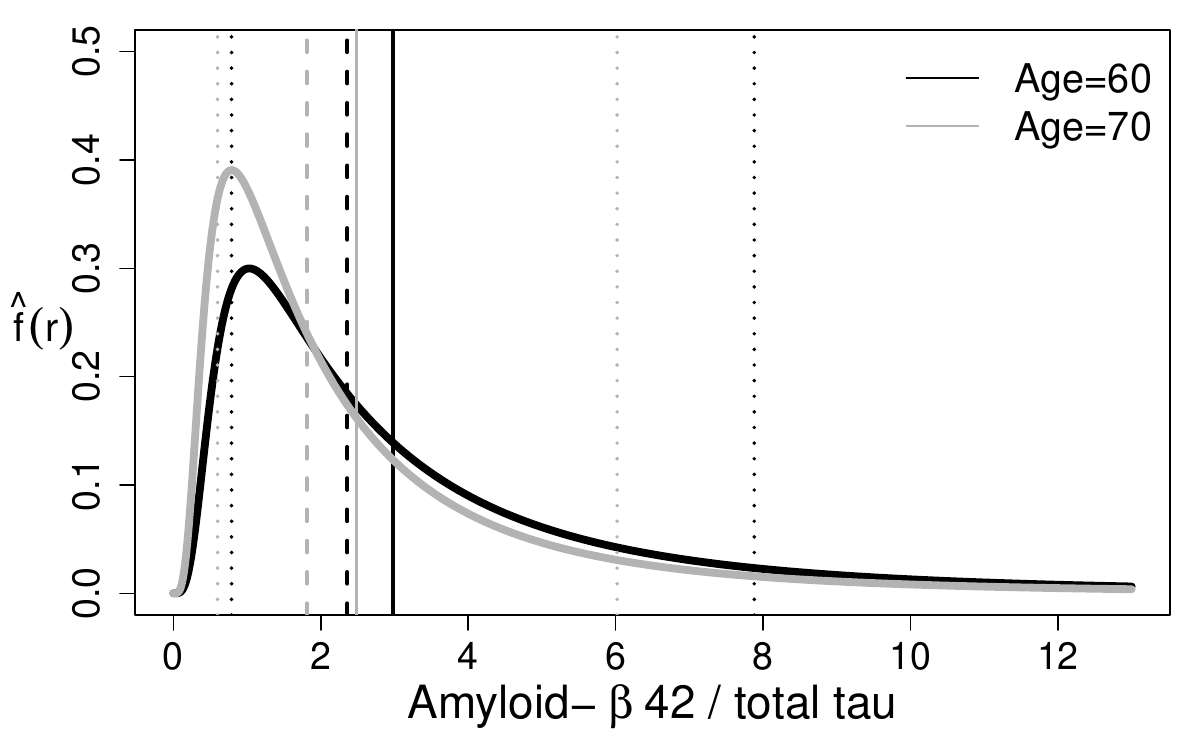}}
\hspace{0.05cm}
\subfloat[][]{\includegraphics[width=0.45\textwidth]{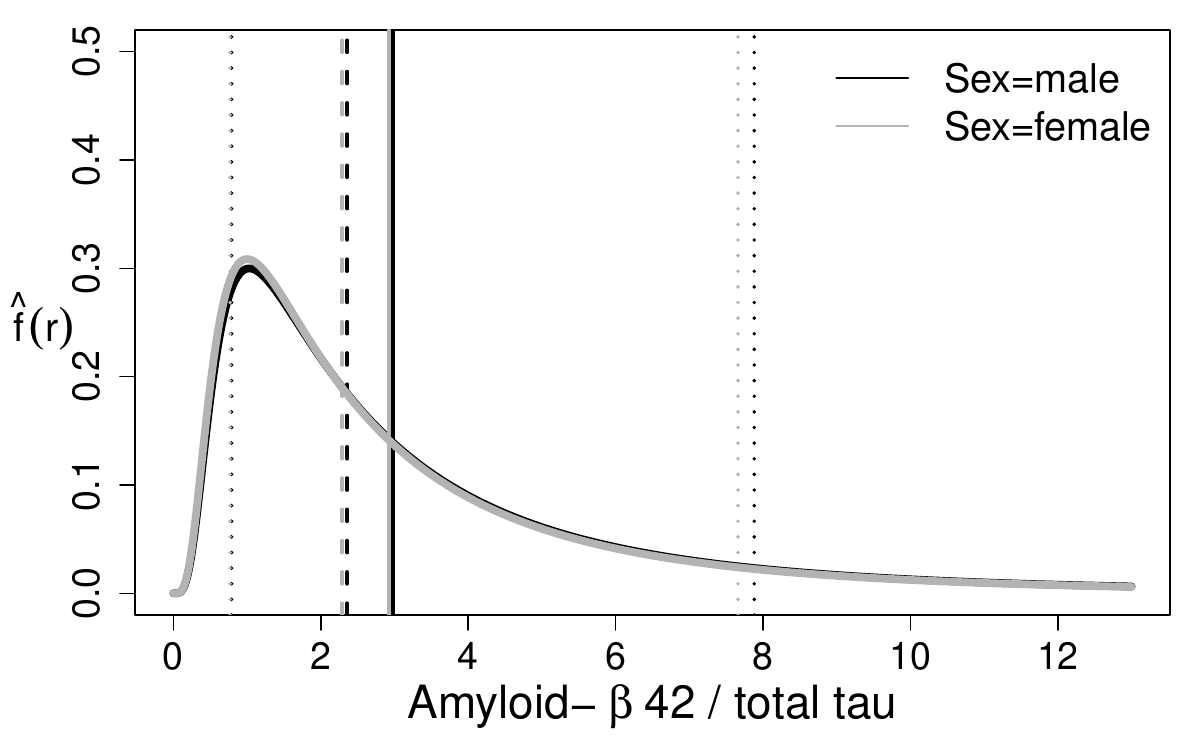}}

\subfloat[][]{\includegraphics[width=0.45\textwidth]{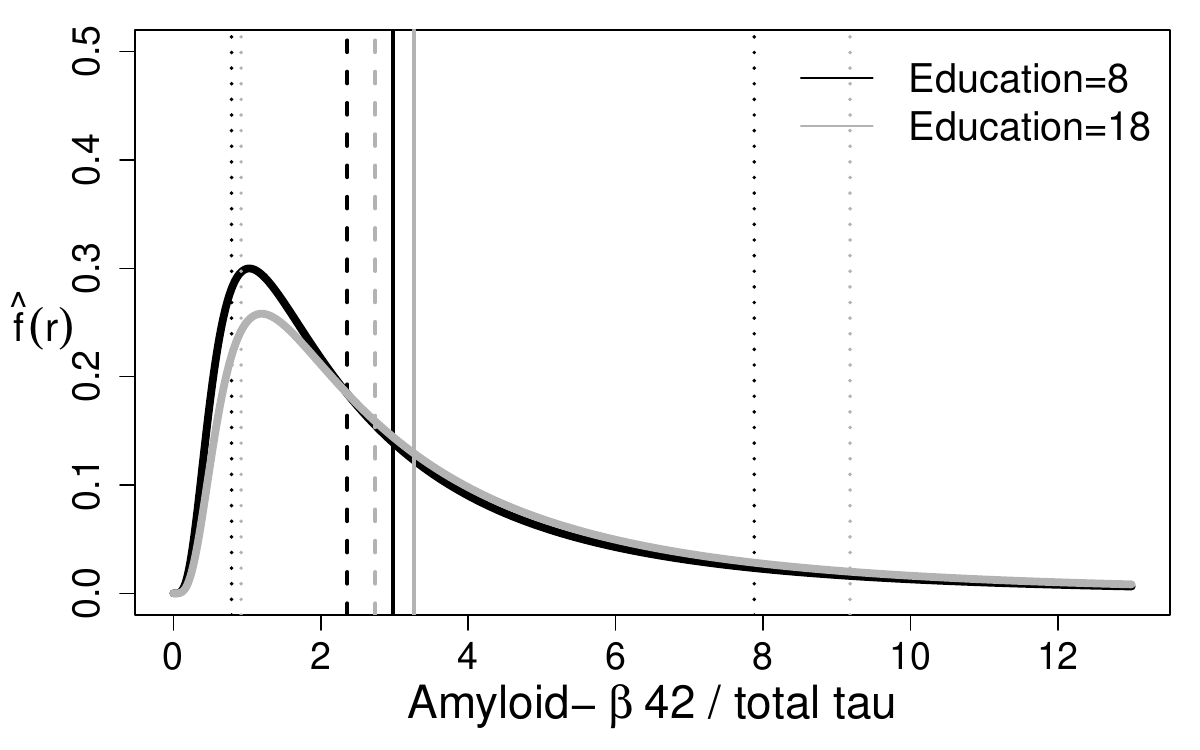}}
\hspace{0.05cm}
\subfloat[][]{\includegraphics[width=0.45\textwidth]{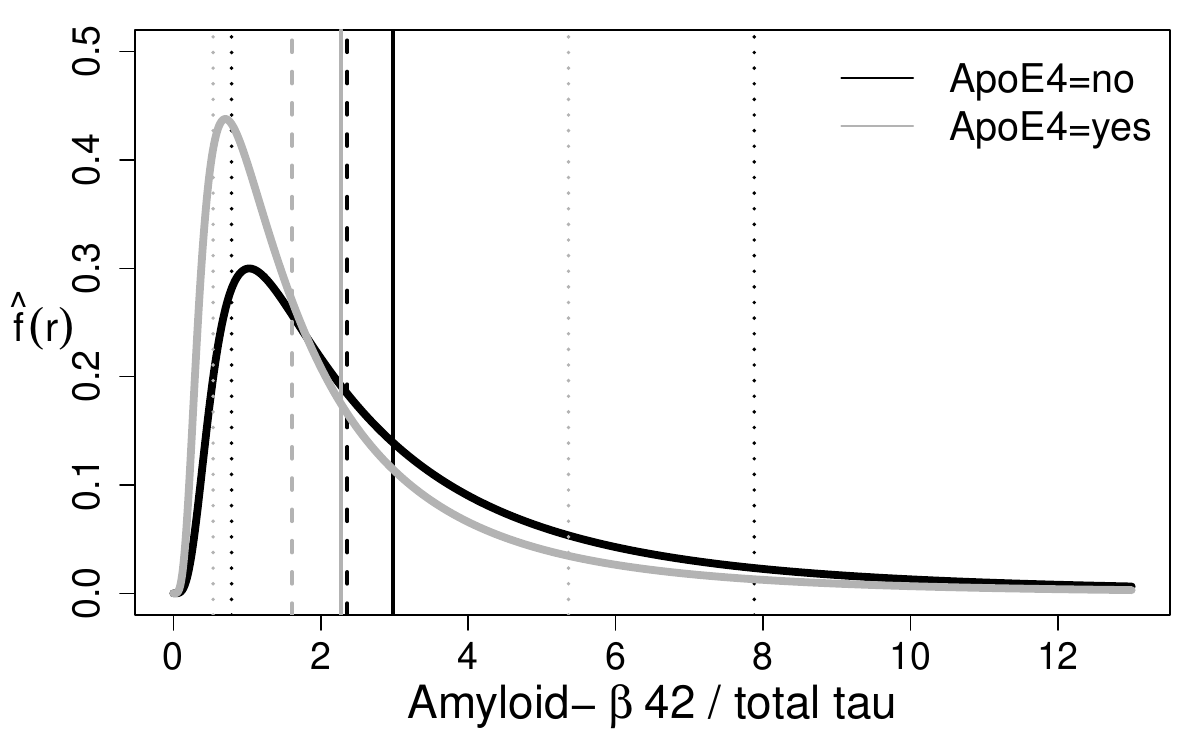}}\vspace{.3cm}
\caption{Analysis of the amyloid-$\beta$ 42/total tau ratios in the DCN study data. The black lines refer to the estimated PDFs for a covariate profile of a randomly selected study participant (60 years of age, Sex = male, Education = 8 years, ApoE $\epsilon$4 = no). The gray lines refer to a situation where the participant would have been 70 years of age (a), would have been female~(b), would have had 18 years of education (c), and would have been a carrier of the ApoE~$\epsilon$4 allele (d). The vertical lines correspond to the estimated mean values (solid), median values (dashed) and $10\%$ and $90\%$ percentiles (dotted).
}
\label{fig:res_R42tau}
\end{center}
\end{figure}


\end{document}